\documentclass[12pt]{article}

\usepackage{amsmath,amssymb,amsfonts,amsthm,mathtools}
\usepackage{mathabx}
\usepackage{stmaryrd}
\usepackage{wasysym}

\usepackage{graphicx}
\usepackage{epstopdf}
\usepackage[table]{xcolor}
\usepackage{adjustbox}

\usepackage{array}
\usepackage{arydshln}
\usepackage{booktabs}
\usepackage{caption}
\usepackage{float}
\usepackage{longtable}
\usepackage{tablefootnote}

\usepackage{enumitem}
\usepackage{multicol}
\usepackage{indentfirst}
\usepackage{anyfontsize}

\usepackage{textcomp}
\usepackage{verbatim}
\usepackage[misc]{ifsym}
\usepackage{algorithmic}
\usepackage{footmisc}
\usepackage{xpatch}

\usepackage[numbers]{natbib}
\usepackage{xurl}
\usepackage[colorlinks=true]{hyperref}
		\hypersetup{urlcolor=blue, citecolor=red}
\usepackage{cleveref}
\usepackage{aliascnt}

\makeatletter
\AtBeginDocument{%
  \xpatchcmd{\@thm}{\thm@headpunct{.}}{\thm@headpunct{}}{}{}%
}
\makeatother

\newtheorem{theorem}{Theorem}[section]

\newaliascnt{lemma}{theorem}

\aliascntresetthe{lemma}

\newaliascnt{proposition}{theorem}

\aliascntresetthe{proposition}

\newaliascnt{corollary}{theorem}

\aliascntresetthe{corollary}

\newaliascnt{conjecture}{theorem}

\aliascntresetthe{conjecture}

\crefname{theorem}{theorem}{theorems}
\Crefname{theorem}{Theorem}{Theorems}

\crefname{lemma}{lemma}{lemmas}
\Crefname{lemma}{Lemma}{Lemmas}

\crefname{proposition}{proposition}{propositions}
\Crefname{proposition}{Proposition}{Propositions}

\crefname{corollary}{corollary}{corollaries}
\Crefname{corollary}{Corollary}{Corollaries}

\crefname{conjecture}{conjecture}{conjectures}
\Crefname{conjecture}{Conjecture}{Conjectures}

\theoremstyle{definition}

\newaliascnt{definition}{theorem}
\newtheorem{definition}[definition]{Definition}
\aliascntresetthe{definition}

\newaliascnt{example}{theorem}
\newtheorem{example}[example]{Example}
\aliascntresetthe{example}

\newaliascnt{remark}{theorem}

\aliascntresetthe{remark}

\crefname{definition}{definition}{definitions}
\Crefname{definition}{Definition}{Definitions}

\crefname{example}{example}{examples}
\Crefname{example}{Example}{Examples}

\crefname{remark}{remark}{remarks}
\Crefname{remark}{Remark}{Remarks}

\DeclareCaptionType{equ}[][]
\theoremstyle{plain}

\renewcommand{\thefootnote}{\fnsymbol{footnote}}

\newcommand{\doititle}[1]{#1}

\allowdisplaybreaks

\begin{document}
\title{Some MDS and ACD codes over commutative non-unital rings of orders 4 and 9 (Revision){\color{red}\footnotemark[1]} }

\author{Jon-Lark Kim \\ Department of Mathematics and \\ Institute for Mathematical and Data Sciences \\ Sogang University, Seoul, Korea \\
		{\tt jlkim@sogang.ac.kr } \\
             \\ Marvin Olavides \\ Department of Mathematics and \\ Institute for Mathematical and Data Sciences \\ Sogang University, Seoul, Korea \\
             {\tt mmolavides@gmail.com}\\
             \\ Young Gun Roe{\color{red}\footnotemark[2]} \\ KNU Research Institute for Mathematical Sciences \\ Kangwon National University, Chuncheon, Korea \\ 
		{\tt ygroe@naver.com}
	}

\date{ }

\maketitle

\begingroup
\renewcommand{\thefootnote}{\fnsymbol{footnote}}

\footnotetext[1]{This is a correction of the paper published in \textit{Advances in Mathematics of Communications}, Volume 24, pages 61–76, 2026. In particular, we corrected the statements of Theorems~\ref{thm: ACD iff LCD I2-code}, \ref{thm: ACD iff LCD I3-code}, \ref{thm: ACD iff LCD Ip-code}, and their proofs.}

\footnotetext[2]{Corresponding author}
\endgroup


\begin{abstract}
There are eleven finite rings of order $p^{2}$ denoted by $A_p$ to $K_p$ in alphabetical order. In particular, we consider $I_{2}$ and $I_{3}$ which are commutative non-unital rings of orders 4 and 9 defined by generators and relations as
\[I_{p}=\left\langle a,b\mid pa=pb=0,\:a^{2}=b,\:ab=0\right\rangle\] for $p=2, 3,$ respectively.
Alahmadi et al. studied codes over these rings. 
In this paper, we study additive complementary dual (ACD) codes over the rings $I_{2}$ and $I_{3}$. We show relations between ACD codes over $I_{2}$ and binary linear complementary dual (LCD) codes using a reduction map from $I_{2}$ to $\mathbb{F}_{2}$, and between ACD codes over $I_{3}$ and ternary LCD codes using a reduction map from $I_{3}$ to $\mathbb{F}_{3}$. Using the first relation, we classify ACD codes over $I_{2}$ with the highest minimum distances for $n=1, 2, 3$ and partially for $n=4, 5$. It turns out that they are maximum distance separable (MDS) codes.
 Using the second relation, we classify ACD codes over $I_{3}$ with the highest minimum Lee distances for $n=1, 2$ and partially for $n=3$. We generalize the two relations into a relation between ACD codes over $I_{p}$ and $p$-ary LCD codes using a reduction map from $I_{p}$ to $\mathbb{F}_{p}$.
\end{abstract}

{\bf{Keywords}} : additive codes, LCD codes, non-unital ring

{\bf{Mathematics Subject Classification}} : 94B05, 16L99

\section{Introduction}
In the early stage of coding theory history, only codes over binary field were considered. Soon after the alphabet was generalized to finite fields. In the 1990s, a connection between linear codes over $\mathbb{Z}_{4}$ and non-linear binary codes was found \cite{Sole4}. After this work was known, many papers on codes over $\mathbb{Z}_{4}$ were published. Then the interest was shifted to codes over commutative rings of order 4 \cite{Dougherty}. Recently, commutative non-unital rings of orders 4 and 9, denoted by $I_{2}$ and $I_{3}$ in the classification of
\cite{Fine}, began to be used as the alphabet \cite{Sole1}. The rings $I_{2}$ and $I_{3}$ are commutative non-unital rings of orders 4 and 9 defined by generators and relations as
\[I_{p}=\left\langle a,b\mid pa=pb=0,\:a^{2}=b,\:ab=0\right\rangle\] for $p=2, 3$ respectively. Alahmadi et al. \cite{Sole1} introduced quasi Type IV codes (quasi self-dual codes with even torsion code) over the ring $I_{2}$. Kim et al. \cite{Roe1} constructed more quasi self-dual codes over $I_{2}$. According to \cite{Kim1}, codes over $I_{2}$ have applications in constructing DNA codes. Alahmadi et al. \cite{Sole9} investigated building up constructions for codes over the ring $I_{3}$.

A linear complementary dual (LCD) code was first introduced by Massey \cite{Massey1} as a reversible code in 1964. Since then, many papers on LCD codes were published and LCD codes were applied in many areas such as cryptography, communication systems and data storage. Massey \cite{Massey2} found that there exist asymptotically good LCD codes. Yang et al. \cite{Massey3} showed a necessary and sufficient condition for a cyclic code to have a complementary dual. Li et al. \cite{Li} studied a family of BCH codes and extended their study to LCD BCH codes. Tzeng \cite{Tzeng} showed that a class of reversible codes has minimum distance greater than that given by the BCH bound. Dougherty et al. \cite{Kim4} gave a linear programming bound on the largest size of an LCD code. Galvez et al. \cite{Roe2} followed this line to find some bounds on binary LCD codes. In 2014, Carlet et al. \cite{Carlet} constructed LCD codes in several ways, and found an application of LCD codes in cryptography against side-channel attacks and fault injection attacks. Thus, it is natural to study LCD codes.

In Carlet's Boolean masking approach, the minimum distance of a code is a performance criterion, and this approach uses only the additivity of codes \cite{Kim2}, \cite{Kim3}. Additive codes are also used in quantum error-correction and quantum computing \cite{Dougherty2}. Moreover, since additive codes include linear codes, additive complementary dual (ACD) codes form a larger class that contain LCD codes. Here we note that according to \cite{Carlet2}, linear codes over $\mathbb{F}_{q}$ are equivalent to LCD codes for $q>3$. Thus, much efforts have been made in the literature to study LCD codes, but relatively less effort has been made for ACD codes. However, some papers on ACD codes have been published recently. Shi et al. \cite{Kim3} studied ACD codes over the ring $E$, and said that ACD codes form a natural generalization of LCD codes over $E$, and the minimum distances of ACD codes over $E$ seem as good as those of Hermitian LCD codes over $\mathbb{F}_{4}$. In \cite{Kim2}, Shi et al. investigated ACD codes over $\mathbb{F}_{4}$. They showed that ACD codes over $\mathbb{F}_{4}$ are sometimes better than LCD codes over $\mathbb{F}_{4}$. They also said that the application of ACD codes to security still makes sense. Shi et al. \cite{Sole5} further studied cyclic ACD codes over $\mathbb{F}_{4}$.  Dougherty et al. \cite{Dougherty2} examined ACD codes from group characters. They defined ACD codes over a finite abelian group, and showed that the best minimum weight of ACD codes is always greater than or equal to the best minimum weight of LCD codes of the same size, and the inequality is often strict. Choi et al. \cite{Kim6} made some observations on MDS subclass of ACD codes. Furthermore, as we shall see later in our paper, there is no LCD code over the rings $I_{2}$ and $I_{3}$, which naturally makes us turn our attention to ACD codes.

Motivated by the above two aspects, we consider ACD codes over $I_{2}$ and $I_{3}$. We show relations between ACD codes over $I_{2}$ and binary LCD codes using a reduction map from $I_{2}$ to $\mathbb{F}_{2}$, and between ACD codes over $I_{3}$ and ternary LCD codes using a reduction map from $I_{3}$ to $\mathbb{F}_{3}$. Using the first relation and Magma computation, we classify ACD codes over $I_{2}$ with the highest minimum distances for $n=1, 2, 3$ and partially for $n=4, 5$.
Using the second relation and Magma computation, we classify ACD codes over $I_{3}$ with the highest minimum distances for $n=1, 2$ and partially for $n=3$. We generalize the two relations into a relation between ACD codes over $I_{p}$ and $p$-ary LCD codes using a reduction map from $I_{p}$ to $\mathbb{F}_{p}$. To be precise, the relation says that an additive code $\mathcal{C}\subseteq I_{p}^{n}$ is ACD if and only if $\beta_{p}\left(\mathcal{C}\right)\subseteq\mathbb{F}_{p}^{n}$ is LCD where $\beta_{p}$ is the reduction map and $| \beta_{p} ( \mathcal{C} ) | = | \mathcal{C} |$. This relation is proved by showing that $\mathcal{C}\cap\mathcal{C}^{\perp}=\left\{ \mathbf{0}\right\} $ if and only if $\beta\left(\mathcal{C}\right)\cap\left(\beta\left(\mathcal{C}\right)\right)^{\perp}=\left\{ \mathbf{0}\right\}$, and $\left|\mathcal{C}\right|\left|\mathcal{C}^{\perp}\right|=p^{2n}$ if and only if $\left|\beta_{p}\left(\mathcal{C}\right)\right|\left|(\beta_{p}\left(\mathcal{C}\right))^{\perp}\right|=p^{n}$. In \cite{Dougherty3}, they construct ACD codes and use them to construct infinite families of binary LCD codes via the Gray map. We note that using the relations, ACD codes over $I_{2}$ and $I_{3}$ can be used to generate binary and ternary LCD codes. We also note that according to the first relation, there are many ACD codes over $I_{2}$, and this way of finding ACD codes is more flexible than those of finding ACD codes over $\mathbb{F}_{4}$ from binary codes in \cite{Kim2}. In \cite{Kim6}, they proved how LCD codes yield ACD codes relative to inner products considered in their work. We, however, use binary and ternary LCD codes that can be found from the already known classification of LCD codes to classify ACD codes over $I_{2}$ and $I_{3}$ using the relations.

Our paper is organized as follows. In Section~\ref{sec: preliminaries}, we review some fundamental concepts
of coding theory and rings. In Section~\ref{sec: ACD codes over I2}, we discuss ACD codes over $I_{2}$. In Section~\ref{sec: ACD codes over I3}, we investigate ACD codes over $I_{3}$. Finally, we conclude this article in Section~\ref{sec: conclusion}.

\section{Preliminaries}
\label{sec: preliminaries}

Here we briefly review some basics of the theory of error-correcting codes \cite{Pless1},\cite{Macwilliams1},\cite{Pless2}, \cite{Pretzel} and commutative rings of orders 4 and 9 \cite{Sole6},\cite{Sole2} that are needed to follow the subsequent material of this paper.

\indent
The {\em Hamming weight} of $\textbf{x}$ $\in$ $\mathbb{F}_{2}^{n}$ is the number of non-zero components, and is denoted by $\operatorname{wt}(\textbf{x})$. If $\mathcal{C}$ is a binary linear code, the {\em dual} of $\mathcal{C}$ is written as $\mathcal{C}^{\perp}$, where $\mathcal{C}^{\perp}=\left\{ \textbf{u}\in\mathbb{F}_{2}^{n}\mid \left\langle\mathbf{u},\mathbf{w}\right\rangle=0 \:\textrm{for all}\:\textbf{w}\in \mathcal{C}\right\}$  and $\left\langle \;, \; \right\rangle $ is the standard inner product. A code $\mathcal{C}$ is {\em self-orthogonal} if $\mathcal{C}\subseteq \mathcal{C}^{\perp}$.

\indent
We will study codes defined over the following ring $I_{2}$, classified as such in \cite{Fine}. The ring $I_{2}$ is defined by two generators $a$ and $b$ with the relations
\[I_{2}=\left\langle a,b\mid 2a=2b=0,\:a^{2}=b,\:ab=0\right\rangle. \]
Thus, $I_{2}$ is a ring with characteristic 2 and is composed of 4 elements $\left\{ 0,a,b,c\right\} $ where $c=a+b$. The addition and multiplication tables for $I_2$ are given below.

\begin{table}[ht]
\caption{Addition table of the ring $I_{2}$}
\label{tbl: add tbl I2}
$$\begin{tabular}{m{0.5cm}| m{1cm} m{1cm} m{1cm} m{0.4cm}}
\hline
+ & 0 & $a$ & $b$ & $c$\tabularnewline
\hline
0 & 0 & $a$ & $b$ & $c$\tabularnewline

$a$ & $a$ & 0 & $c$ & $b$\tabularnewline

$b$ & $b$ & $c$ & 0 & $a$\tabularnewline

$c$ & $c$ & $b$ & $a$ & 0\tabularnewline
\hline
\end{tabular}$$
\end{table}

\begin{table}[ht]
\caption{Multiplication table of the ring $I_{2}$}
\label{tbl: mult tbl I2}
$$\begin{tabular}{m{0.5cm} | m{1cm} m{1cm} m{1cm} m{0.4cm}}
\hline
$\times$ & 0 & $a$ & $b$ & $c$\tabularnewline
\hline
0 & 0 & 0 & 0 & 0\tabularnewline

$a$ & 0 & $b$ & 0 & $b$\tabularnewline

$b$ & 0 & 0 & 0 & 0\tabularnewline

$c$ & 0 & $b$ & 0 & $b$\tabularnewline
\hline
\end{tabular}$$
\end{table}

\indent
We note that the additive group of $I_{2}$ is the Klein-4 group.
From the multiplication table, the ring is commutative, and there is no multiplicative identity.
It is a local ring with maximal ideal $J=\left\{ 0,b\right\}$ and residue field $\left\{ 0,1\right\}$.
We observe that the ring $I_{2}$ may be included as an ideal $\left(u\right)$ in the finite unital
chain ring $R=\mathbb{F}_{2}\left[u\right]/\left(u^{3}\right)$. That is,
the ideal $\left(u\right)$ consists of  $xu + yu^{2}$ where $x,y\in\mathbb{F}_{2}$. The correspondence between $I_{2}$ and the ideal
$\left(u\right)$ is given by $0 \leftrightarrow 0$, $a \leftrightarrow u$,
$b \leftrightarrow u^{2}$, and $c \leftrightarrow u + u^{2}$. The roles of $a$ and $c$ could be reversed.
We denote the map of reduction modulo $J$ by $\alpha \colon I_{2}\rightarrow I_{2}/J\simeq\mathbb{F}_{2}$. Then we have $\alpha\left(0\right)=\alpha\left(b\right)=0$, and $\alpha\left(a\right)=\alpha\left(c\right)=1$. This map is extended naturally in a map from $I_{2}^{n}$ to $\mathbb{F}_{2}^{n}$.
There is another additive map that interacts in an interesting way given by $i \colon \mathbb{F}_{2}\rightarrow I_{2}$ where $i(x) = xb$. This is an injective homomorphism of abelian groups. We observe that $xy =(\alpha(x)\alpha(y))b$ for $x, y \in I_{2}$.
We denote the standard inner product on $I_{2}$ as $\left(\;,\;\right)$. Then for $\mathbf{x},\mathbf{y}\in I_{2}^{n}$, $\left(\mathbf{x},\mathbf{y}\right)=i\left(\left\langle \alpha\left(\mathbf{x}\right),\alpha\left(\mathbf{y}\right)\right\rangle \right)=\left\langle \alpha\left(\mathbf{x}\right),\alpha\left(\mathbf{y}\right)\right\rangle b$. As a consequence, $\mathbf{y}\in \mathcal{C}^{\perp}$ if and only if $\alpha\left(\mathbf{y}\right)\in\left(\alpha \left( \mathcal{C}  \right)\right)^{\perp}$.
An annihilator for $I_{2}$ is an element $x\in I_{2}$ such that $xy=0$ for all $y\in I_{2}$. Thus, we see from Table \ref{tbl: mult tbl I2} that there are two annihilators for $I_{2}$, namely $0$ and $b$. We note that if $N\left(\mathbf{y}\right)$ denotes the number of non-annihilators of a vector $\mathbf{y}\in I_{2}^{n}$, then $\left(\mathbf{y},\mathbf{y}\right)=N\left(\mathbf{y}\right)b$.
A {\em linear $I_{2}$-code $\mathcal{C}$ of length $n$} is an $I_{2}$-submodule of $I_{2}^{n}$. It is described by the $I_{2}$-span of the rows of a generator matrix. The {\em weight} of $\mathbf{x}\in I_{2}^{n}$ is the number of coordinates of $\mathbf{x}$ which are not zero.
The \emph{minimum distance} of an $I_2$-code $\mathcal{C}$ is equal to the minimum weight among all nonzero codewords of $\mathcal{C}$.
Two $I_{2}$-codes are called {\em equivalent} if one can be obtained from the other by a coordinate permutation. These two definitions are the same as those of binary codes.
We note that the elements of $I_{2}$ can be written as $c_{ij}=ia+jb$ where $0\leq i,j<2$. We define a Gray map $\phi_{2} \colon I_{2}\rightarrow\mathbb{F}_{2}^{2}$ by $\phi_{2}\left(ia+jb\right)=\left(i,j\right)$. With this definition, adjacent values in $I_{2}$ only differ by a single bit over $\mathbb{F}_{2}^{2}$. This Gray map $\phi_{2} \colon I_{2}\rightarrow\mathbb{F}_{2}^{2}$ is extended naturally to a map $\Phi_{2} \colon I_{2}^{n}\rightarrow\mathbb{F}_{2}^{2n}$ by $\Phi_{2}\left(c_{1}, \dotsc ,c_{n}\right)=\left(\phi_{2}\left(c_{1}\right), \dotsc ,\phi_{2}\left(c_{n}\right)\right)$. That is, for $c_{k}=i_{k}a+j_{k}b$, $\Phi_{2}\left( \mathbf{c} \right)=\left(i_{1},j_{1},i_{2},j_{2}, \dotsc ,i_{n},j_{n}\right)\in\mathbb{F}_{2}^{2n}$. We also define the Lee weight of $x=ia+jb\in I_{2}$ by $w_{L}\left(x\right)=\textrm{min}\left(i,2-i\right)+\textrm{min}\left(j,2-j\right)$. Then the Gray map preserves the distances because $w_{L}\left(x\right)=w_{H}\left( \phi_{2} \left(x\right)\right)=\textrm{min}\left(i,2-i\right)+\textrm{min}\left(j,2-j\right)$.

The ring $I_{3}$ is defined by two generators $a$ and $b$ with the relations
\[I_{3}=\left\langle a,b\mid3a=3b=0,\:a^{2}=b,\:ab=0\right\rangle. \]
$I_{3}$ is commutative without multiplicative identity and consists of $3^{2}$ elements, which can be written as $c_{ij}=ia+jb$ where $0\leq i,j<3$ \cite{Sole6}. Then we can write $I_{3}$ as $I_{3}=\left\{ ax+by \mid x,y\in\mathbb{F}_{3}\right\} $.
Thus, $c_{00}=0$, $c_{10}=a$, $c_{01}=b$, $c_{11}=a+b$, $c_{20}=2a$, $c_{02}=2b$, $c_{21}=2a+b$, $c_{12}=a+2b$, $c_{22}=2a+2b$.
$I_{3}$ contains a unique maximal ideal $J_{3}=\left\{ jb:0\leq j<3\right\}$. The reduction map modulo $J_{3}$ is defined as $\beta \colon I_{3}\mapsto I_{3}/J_{3} \simeq \mathbb{F}_{3}$ by $\beta\left(c_{ij}\right)=i$ where $0\leq i<3$ \cite{Sole6}.
The addition and multiplication tables are given below. We define a Gray map $\phi_{3} \colon I_{3}\rightarrow\mathbb{F}_{3}^{2}$ by $\phi_3\left(ia+jb\right)=\left(i,j\right)$. With this definition, adjacent values in $I_{3}$ only differ in exactly one coordinate over $\mathbb{F}_{3}^{2}$. This Gray map $\phi_{3} \colon I_{3}\rightarrow\mathbb{F}_{3}^{2}$ is extended naturally to a map $\Phi_{3} \colon I_{3}^{n}\rightarrow\mathbb{F}_{3}^{2n}$ by $\Phi_{3}\left(c_{1}, \dotsc ,c_{n}\right)=\left(\phi_{3}\left(c_{1}\right), \dotsc ,\phi_{3}\left(c_{n}\right)\right)$. That is, for $c_{k}=i_{k}a+j_{k}b$, $\Phi_{3}\left( \mathbf{c} \right)=\left(i_{1},j_{1},i_{2},j_{2}, \dotsc ,i_{n},j_{n}\right)\in\mathbb{F}_{3}^{2n}$. We also define the Lee weight of $x=ia+jb\in I_{3}$ by $w_{L}\left(x\right)=\textrm{min}\left(i,3-i\right)+\textrm{min}\left(j,3-j\right)$. Then the Gray map preserves the distances because $w_{L}\left(x\right)=w_{H}\left( \phi_3 \left(x\right)\right)=\textrm{min}\left(i,3-i\right)+\textrm{min}\left(j,3-j\right)$.
A {\em linear $I_{3}$-code $\mathcal{C}$ of length $n$} is an $I_{3}$-submodule of $I_{3}^{n}$.

The ring $I_{p}$ can be described in a similar manner \cite{Sole6}.
The ring $I_{p}$ in the classification of \cite{Fine} is defined by two generators $a$ and $b$ with the relations
\[I_{p}=\left\langle a,b\mid pa=pb=0,\:a^{2}=b,\:ab=0\right\rangle. \]
$I_{p}$ is commutative without multiplicative identity and consists of $p^{2}$ elements, which can be written as $c_{ij}=ia+jb$ where $0\leq i,j<p$. Then we can write $I_{p}$ as $I_{p}=\left\{ ax+by \mid x,y\in\mathbb{F}_{p}\right\} $.
$I_{p}$ contains a unique maximal ideal $J_{p}=\left\{ jb \mathrel{:} 0\leq j<p\right\}$. The reduction map modulo $J_{p}$ is defined as $\beta_{p} \colon I_{p}\mapsto I_{p}/J_{p} \simeq \mathbb{F}_{p}$ by $\beta_{p}\left(c_{ij}\right)=i$ where $0\leq i<p$.
A {\em linear $I_{p}$-code $\mathcal{C}$ of length $n$} is an $I_{p}$-submodule of $I_{p}^{n}$.
Two $I_{p}$-codes $\mathcal{C}$ and $\mathcal{C}'$ are {\em monomially equivalent} if there is an $n\times n$ monomial matrix $M$ such that $\mathcal{C}'=\left\{ \mathbf{c} M \mathrel{:} \mathbf{c} \in\mathcal{C}\right\} $. We shall use monomial equivalence for the classification purposes throughout our paper. Note that monomial equivalence and permutation equivalence are the same for $I_{2}$.

We recall from~\cite[p. 319]{Macwilliams1} that for a nonlinear block code $\mathcal{C}$ over the finite field $\operatorname{GF}(q)$ of length $n$ and minimum Hamming distance $d$ satisfies the Singleton bound given by

\begin{equation} \label{eq-Singleton}
d \le n - \log_{q} | \mathcal{C} | +1.
\end{equation}

 We can replace $\operatorname{GF}(q)$  by $I_p$ so that if $\mathcal{C}$ is a block code over $I_p$ of length $n$ (not necessarily an $I_p$-code), the Hamming distance $d$ of $\mathcal{C}$ satisfies the Singleton bound
\begin{equation} \label{eq-Singleton-Ip}
d \le n - \log_{p^2} | \mathcal{C} | +1.
\end{equation}
A code obtaining the upper bound of  (\ref{eq-Singleton-Ip}) is called a {\em maximum distance separable (MDS) code}.

We recall that for an $[n,k,d]$ linear code $\mathcal{C}$ over the field $\mathbb{F}_{2}$ or $\mathbb{F}_{3}$, the Lee weight satisfies the following Singleton bound:
\begin{equation}
d_{L}\left( \mathcal{C} \right)\leq n-k+1  \label{eq:Singleton-Lee}
\end{equation}
We will call a code satisfying the bound in (\ref{eq:Singleton-Lee}) an optimal code.

\begin{table}[ht]
\caption{Addition table of the ring $I_{3}$}
\resizebox{\textwidth}{!}{%
$$\begin{tabular}{m{0.5cm}| m{1.2cm} m{1.2cm} m{1.2cm} m{1.2cm} m{1.2cm} m{1.2cm} m{1.2cm} m{1.2cm} m{1.2cm}}
\hline
+ & $c_{00}$ & $c_{10}$ & $c_{01}$ & $c_{11}$ & $c_{20}$ & $c_{02}$ & $c_{21}$ & $c_{12}$ & $c_{22}$\tabularnewline
\hline
$c_{00}$ & 0 & $a$ & $b$ & $a+b$ & 2a & $2b$ & $2a+b$ & $a+2b$ & $2a+2b$\tabularnewline

$c_{10}$ & $a$ & $2a$ & $a+b$ & $2a+b$ & 0 & $a+2b$ & $b$ & $2a+2b$ & $2b$\tabularnewline

$c_{01}$ & $b$ & $a+b$ & $2b$ & $a+2b$ & $2a+b$ & 0 & $2a+2b$ & $a$ & $2a$\tabularnewline

$c_{11}$ & $a+b$ & $2a+b$ & $a+2b$ & $2a+2b$ & $b$ & $a$ & $2b$ & $2a$ & 0\tabularnewline

$c_{20}$ & $2a$ & 0 & $2a+b$ & $b$ & $a$ & $2a+2b$ & $a+b$ & $2b$ & $a+2b$\tabularnewline

$c_{02}$ & $2b$ & $a+2b$ & 0 & $a$ & $2a+2b$ & $b$ & $2a$ & $a+b$ & $2a+b$\tabularnewline

$c_{21}$ & $2a+b$ & $b$ & $2a+2b$ & $2b$ & $a+b$ & $2a$ & $a+2b$ & 0 & $a$\tabularnewline

$c_{12}$ & $a+2b$ & $2a+2b$ & $a$ & $2a$ & $2b$ & $a+b$ & 0 & $2a+b$ & $b$\tabularnewline

$c_{22}$ & $2a+2b$ & $2b$ & $2a$ & 0 & $a+2b$ & $2a+b$ & a & $b$ & $a+b$\tabularnewline
\hline
\end{tabular}$$}
\end{table}

\begin{table}[ht]
\caption{Multiplication table of the ring $I_{3}$}
\resizebox{\textwidth}{!}{%
$$\begin{tabular}{m{0.5cm}| m{1.2cm} m{1.2cm} m{1.2cm} m{1.2cm} m{1.2cm} m{1.2cm} m{1.2cm} m{1.2cm} m{1.2cm}}
\hline
$\times$ & $c_{00}$ & $c_{10}$ & $c_{01}$ & $c_{11}$ & $c_{20}$ & $c_{02}$ & $c_{21}$ & $c_{12}$ & $c_{22}$\tabularnewline
\hline
$c_{00}$ & 0 & 0 & 0 & 0 & 0 & 0 & 0 & 0 & 0\tabularnewline

$c_{10}$ & 0 & $b$ & 0 & $b$ & $2b$ & 0 & $2b$ & $b$ & $2b$\tabularnewline

$c_{01}$ & 0 & 0 & 0 & 0 & 0 & 0 & 0 & 0 & 0\tabularnewline

$c_{11}$ & 0 & $b$ & 0 & $b$ & $2b$ & 0 & $2b$ & $b$ & $2b$\tabularnewline

$c_{20}$ & 0 & $2b$ & 0 & $2b$ & $b$ & 0 & $b$ & $2b$ & $b$\tabularnewline

$c_{02}$ & 0 & 0 & 0 & 0 & 0 & 0 & 0 & 0 & 0\tabularnewline

$c_{21}$ & 0 & $2b$ & 0 & $2b$ & $b$ & 0 & $b$ & $2b$ & $b$\tabularnewline

$c_{12}$ & 0 & $b$ & 0 & $b$ & $2b$ & 0 & $2b$ & $b$ & $2b$\tabularnewline

$c_{22}$ & 0 & $2b$ & 0 & $2b$ & $b$ & 0 & $b$ & $2b$ & $b$\tabularnewline
\hline
\end{tabular}$$}
\end{table}

\section{ACD codes over $\boldsymbol{I_{2}}$}
\label{sec: ACD codes over I2}

\noindent
Definitions such as left-LCD codes and left-ACD codes over $E$ as well as binary ACD codes can be found in \cite{Kim1}.
We start this section by making similar definitions for $I_{2}$-codes.
Let $\mathcal{C}$ be an $I_{2}$-code of length $n$.
The dual $\mathcal{C}^{\perp}$ of $\mathcal{C}$ is defined by
\begin{center}
$\mathcal{C}^{\perp}=\left\{ \mathbf{y}\in I_{2}^{n}\mid\forall\mathbf{x}\in \mathcal{C},\left(\mathbf{x},\mathbf{y}\right)=0\right\}. $
\end{center}

Call a linear $I_{2}$-code $\mathcal{C}$ \textit{nice} if $\left|\mathcal{C}\right|\left|\mathcal{C}^{\perp}\right|=4^{n}$.
Define a linear $I_{2}$-code $\mathcal{C}$ to be LCD if it is nice and $\mathcal{C}\cap \mathcal{C}^{\perp}=\left\{ \mathbf{0}\right\} $.
An additive $I_{2}$-code of length $n$ is an additive subgroup of $I_{2}^{n}$.
An additive $I_{2}$-code $\mathcal{C}$ is called nice if $\left|\mathcal{C}\right|\left|\mathcal{C}^{\perp}\right|=4^{n}$.

\begin{definition}
An additive $I_{2}$-code $\mathcal{C}$ is ACD if it is nice and $\mathcal{C}\cap \mathcal{C}^{\perp}=\left\{ \mathbf{0}\right\} $.
\end{definition}

We shall investigate LCD and ACD codes over $I_{2}$, just as research on LCD and ACD codes over $E$ has been carried out. We start with a few simple examples to derive characteristics, if any, of LCD codes over $I_{2}$.
\begin{example}  \label{ex: I2-code J}
Let $J=\left\{ 0,b\right\}$. Then $J^{\perp}$ consists of elements of $I_{2}$ with which the inner product of the elements of J is $0$. Thus, $J^{\perp} = I_2  = \left\{ 0,a,b,c\right\} $,
$\left|J\right|\left|J^{\perp}\right|\neq4^{1}$, and $J\cap J^{\perp}\neq\left\{ 0 \right\} $.
Hence, J is not LCD.
\end{example}

\begin{example}  \label{ex: I2-code R_2}
Let $R_{2}=\left\{ 00,aa,bb,cc\right\} $. Then
$R_{2}^{\perp}=\left\{ 00,aa,bb,cc,0b,b0,ac,ca\right\} $,
$\left|R_{2}\right|\left|R_{2}^{\perp}\right|\neq4^{2}$, and $R_{2}\cap R_{2}^{\perp}\neq\left\{ \mathbf{0}\right\} $.
Hence, $R_{2}$ is not LCD.
\end{example}

The following theorem confirms the observation suggested by \Cref{ex: I2-code J,ex: I2-code R_2} that LCD codes over $I_{2}$ do not exist.
\begin{theorem}
There is no LCD code over $I_{2}$.
\end{theorem}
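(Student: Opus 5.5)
Fix a linear $I_{2}$-code $\mathcal{C}$ of length $n$, i.e. an $I_{2}$-submodule of $I_{2}^{n}$. We show it cannot be LCD by proving that it is either not nice or meets its dual nontrivially. The tool is the identity from the preliminaries, $(\mathbf{x},\mathbf{y})=\langle\alpha(\mathbf{x}),\alpha(\mathbf{y})\rangle b$. It gives $\mathbf{y}\in\mathcal{C}^{\perp}$ if and only if $\alpha(\mathbf{y})\in(\alpha(\mathcal{C}))^{\perp}$. Hence
\[
\mathcal{C}^{\perp}=\alpha^{-1}\bigl((\alpha(\mathcal{C}))^{\perp}\bigr),
\]
which contains $\ker\alpha = J^{n}=\{0,b\}^{n}$. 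Consequently $|\mathcal{C}^{\perp}| = 2^{n}\,|(\alpha(\mathcal{C}))^{\perp}|$, since $\alpha\colon I_{2}^{n}\to\mathbb{F}_{2}^{n}$ is surjective with kernel of size $2^{n}$.

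The key step uses linearity. For any $\mathbf{x}\in\mathcal{C}$ and any scalar $r\in I_{2}$ we have $r\mathbf{x}\in\mathcal{C}$. Taking $r=a$ gives $a\mathbf{x}=\alpha(\mathbf{x})b\in J^{n}\subseteq\mathcal{C}^{\perp}$, where the equality follows from $xy=(\alpha(x)\alpha(y))b$ with $\alpha(a)=1$. So $\mathcal{C}\cap\mathcal{C}^{\perp}\supseteq a\mathcal{C}=\{\alpha(\mathbf{x})b:\mathbf{x}\in\mathcal{C}\}$, which is nonzero as soon as $\alpha(\mathcal{C})\neq\{\mathbf{0}\}$. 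We then split into two cases.

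\emph{Case $\alpha(\mathcal{C})\neq\{\mathbf{0}\}$.} The previous paragraph gives $\mathcal{C}\cap\mathcal{C}^{\perp}\neq\{\mathbf{0}\}$, so $\mathcal{C}$ is not LCD.

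\emph{Case $\alpha(\mathcal{C})=\{\mathbf{0}\}$.} Then $\mathcal{C}\subseteq J^{n}$ and $\mathcal{C}^{\perp}=I_{2}^{n}$. If $\mathcal{C}\neq\{\mathbf{0}\}$, the intersection $\mathcal{C}\cap\mathcal{C}^{\perp}=\mathcal{C}$ is nontrivial. If $\mathcal{C}=\{\mathbf{0}\}$, it is trivially complementary, and niceness also holds, since $|\mathcal{C}||\mathcal{C}^{\perp}|=1\cdot 4^{n}$. The main obstacle is therefore this degenerate zero code, together with the convention $n\ge 1$. I would handle it by stating that codes are understood to be nonzero, consistent with the examples above, so the zero code is excluded and no LCD code over $I_{2}$ exists.

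If one prefers an argument through niceness instead, the counting formula gives $|\mathcal{C}^{\perp}|=2^{n}\cdot 2^{n-k}$, where $k=\dim\alpha(\mathcal{C})$. Niceness then forces $|\mathcal{C}|=2^{k}$, which means $\alpha|_{\mathcal{C}}$ is injective. But for a linear code, injectivity on $\mathcal{C}$ fails whenever $\mathcal{C}$ contains a vector $\mathbf{x}$ with $\alpha(\mathbf{x})\neq\mathbf{0}$: the vector $a\mathbf{x}$ is a nonzero element of $\mathcal{C}\cap\ker\alpha$. This gives a second route to the same conclusion.
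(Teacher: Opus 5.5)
Your proof is correct and is essentially the paper's argument. The paper also takes a nonzero codeword $\mathbf{x}\in\mathcal{C}$ and notes that $\mathbf{x}\in\mathcal{C}^{\perp}$ if all its entries lie in $\{0,b\}$; otherwise it uses linearity to obtain the nonzero vector $a\mathbf{x}\in\{0,b\}^{n}\subseteq\mathcal{C}^{\perp}$. Your split on whether $\alpha(\mathcal{C})$ is zero is the same dichotomy organized at the level of the code, and the paper excludes the zero code implicitly, by choosing a nonzero codeword, just as you exclude it explicitly.

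Your second, niceness-based route is also valid and shows slightly more: no nonzero linear $I_{2}$-code is even nice.
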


\begin{proof}
We choose any non-zero codeword $\mathbf{x}$ in a linear code $\mathcal{C}$ over $I_{2}$ .
If all the components of $\mathbf{x}$ consist of zeros and $b$s, then $\mathbf{x}$ is also in the dual of $\mathcal{C}$ since $b$ is an annihilator in $I_{2}$.
Otherwise, every coordinate of $a \mathbf{x}$ belongs to $\{ 0, b\}$; hence $a\mathbf{x}$ is non-zero and is also in the dual of $\mathcal{C}$.  Both cases imply that $\mathcal{C} \cap \mathcal{C}^{\perp} \ne \left\{ \mathbf{0}\right\} $.
Therefore, $\mathcal{C}$ is not LCD.
\end{proof}

A recent paper by Alahmadi et al. \cite{Sole3} confirms that nontrivial LCD codes over $I_{2}$ do not exist. Because there is no LCD code over $I_{2}$, we naturally turn our attention to ACD codes.
First, we show the following relation between ACD codes over $I_{2}$ and binary LCD codes. We note that an additive code over $\mathbb{F}_{2}$ is linear, so that ACD is LCD over $\mathbb{F}_{2}$.

\begin{theorem}  \label{thm: ACD iff LCD I2-code}
An additive code $\mathcal{C}\subseteq I_{2}^{n}$ is ACD if and only if $\alpha\left(\mathcal{C}\right)\subseteq\mathbb{F}_{2}^{n}$ is LCD {\color{blue}and $\left|\alpha\left(\mathcal{C}\right)\right|=\left|\mathcal{C}\right|$}.
\end{theorem}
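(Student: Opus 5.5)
Write $D=\alpha(\mathcal{C})$, a binary linear code since $\alpha$ is additive. The plan is to compute $|\mathcal{C}^{\perp}|$ and $\mathcal{C}\cap\mathcal{C}^{\perp}$ in terms of $D$, using the identity $(\mathbf{x},\mathbf{y})=\langle\alpha(\mathbf{x}),\alpha(\mathbf{y})\rangle b$ stated in the preliminaries. That identity gives $\mathbf{y}\in\mathcal{C}^{\perp}$ if and only if $\alpha(\mathbf{y})\in D^{\perp}$, so $\mathcal{C}^{\perp}=\alpha^{-1}(D^{\perp})$. Since $\alpha\colon I_2^n\to\mathbb{F}_2^n$ is a surjective group homomorphism with kernel $J^n$ of order $2^n$, we get $|\mathcal{C}^{\perp}|=2^n|D^{\perp}|=2^n\cdot 2^n/|D|=4^n/|D|$. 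Consequently $\mathcal{C}$ is nice if and only if $|\mathcal{C}|=|D|$, i.e.\ if and only if $\mathcal{C}\cap J^n=\{\mathbf{0}\}$, equivalently $\alpha|_{\mathcal{C}}$ is injective. This already settles the size condition in both directions.

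Next I would treat the intersection, assuming niceness, so that $\alpha|_{\mathcal{C}}$ is injective. Suppose first that $D$ is LCD. If $\mathbf{x}\in\mathcal{C}\cap\mathcal{C}^{\perp}$, then $\alpha(\mathbf{x})\in D\cap D^{\perp}=\{\mathbf{0}\}$. Hence $\mathbf{x}\in\mathcal{C}\cap J^n=\{\mathbf{0}\}$. So $\mathcal{C}$ is ACD.

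Conversely, suppose $\mathcal{C}$ is ACD and let $\mathbf{v}\in D\cap D^{\perp}$. Choose $\mathbf{x}\in\mathcal{C}$ with $\alpha(\mathbf{x})=\mathbf{v}$. Then $\alpha(\mathbf{x})\in D^{\perp}$, so $\mathbf{x}\in\mathcal{C}^{\perp}$, hence $\mathbf{x}=\mathbf{0}$ and $\mathbf{v}=\mathbf{0}$. Thus $D$ is LCD; over $\mathbb{F}_2$ the condition $|D||D^{\perp}|=2^n$ is automatic for a linear code. Niceness of $\mathcal{C}$ has already given $|D|=|\mathcal{C}|$.

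The only delicate step is the first one: checking that the kernel of $\alpha$ is exactly $J^n=\{0,b\}^n$ and that the preimage count $|\alpha^{-1}(D^{\perp})|=2^n|D^{\perp}|$ holds. This is exactly where the extra hypothesis $|\alpha(\mathcal{C})|=|\mathcal{C}|$ enters; without it, codes containing nonzero vectors over $\{0,b\}$ would be counterexamples, as in \Cref{ex: I2-code J}. Everything else is direct bookkeeping with the inner-product identity.
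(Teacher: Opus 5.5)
Your proof is correct, and it organizes the argument differently from the paper. You isolate the identity $\mathcal{C}^{\perp}=\alpha^{-1}(\alpha(\mathcal{C})^{\perp})$, which holds for \emph{every} additive code. It gives $|\mathcal{C}^{\perp}|=4^{n}/|\alpha(\mathcal{C})|$, so niceness of $\mathcal{C}$ is \emph{equivalent} to $|\alpha(\mathcal{C})|=|\mathcal{C}|$. The intersection condition then reduces to a two-line element chase in each direction.

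The paper uses the same ingredients: the formula $(\mathbf{x},\mathbf{y})=\langle\alpha(\mathbf{x}),\alpha(\mathbf{y})\rangle b$, and the fact that each element of $\alpha(\mathcal{C})^{\perp}$ has $2^{n}$ preimages in $\mathcal{C}^{\perp}$. The difference is in the forward direction. There the paper proves $\alpha(\mathcal{C})\cap\alpha(\mathcal{C})^{\perp}=\{\mathbf{0}\}$ by finding, for each $\mathbf{c}$ with $\alpha(\mathbf{c})\neq\mathbf{0}$, a witness $\mathbf{c}_1\in\mathcal{C}$ with $(\mathbf{c},\mathbf{c}_1)\neq 0$, after a parity case split that is not actually needed. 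More importantly, it obtains $|\mathcal{C}|=|\alpha(\mathcal{C})|$ from the trivial intersection rather than from niceness. Two distinct codewords with the same image differ by a nonzero vector in $\{0,b\}^{n}$, and that vector lies in $\mathcal{C}^{\perp}$. The paper's converse is essentially your argument in contrapositive form, phrased with witnesses.

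Your route is shorter and shows exactly what the added hypothesis is: niceness. The paper's route shows that trivial intersection alone already forces $\alpha$ to be injective on $\mathcal{C}$. Combining the two, using $\{0,b\}^{n}\subseteq\mathcal{C}^{\perp}$ for every $\mathcal{C}$, gives something neither proof states. For additive $I_2$-codes the niceness clause in the definition of ACD is redundant, since $\mathcal{C}\cap\mathcal{C}^{\perp}=\{\mathbf{0}\}$ already implies $|\mathcal{C}||\mathcal{C}^{\perp}|=4^{n}$.
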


{\color{blue}
\begin{proof}
Assume that $\mathcal{C}\subseteq I_{2}^{n}$ is ACD, i.e.,
$\mathcal{C}\cap\mathcal{C}^{\perp}=\left\{\mathbf{0}\right\}$ and
$\left|\mathcal{C}\right|\left|\mathcal{C}^{\perp}\right|=4^{n}$.
We will show that
$\alpha\left(\mathcal{C}\right)\cap\left(\alpha\left(\mathcal{C}\right)\right)^{\perp}
=\left\{\mathbf{0}\right\}$,
$\left|\alpha\left(\mathcal{C}\right)\right|
\left|(\alpha\left(\mathcal{C}\right))^{\perp}\right|=2^{n}$
and
$\left|\alpha\left(\mathcal{C}\right)\right|
=\left|\mathcal{C}\right|$.
Since we assume
$\mathcal{C}\cap\mathcal{C}^{\perp}
=\left\{\mathbf{0}\right\}$,
no non-zero element of $\mathcal{C}$ is in $\mathcal{C}^{\perp}$.
Suppose $\mathbf{c}\in\mathcal{C}$ is non-zero.
We may assume that $\alpha\left(\mathbf{c}\right)$ is non-zero.
If $\alpha\left(\mathbf{c}\right)$ is zero, there is nothing to prove.
We shall show that
$\alpha\left(\mathbf{c}\right)
\notin
\alpha\left(\mathcal{C}\right)
\cap
\left(\alpha\left(\mathcal{C}\right)\right)^{\perp}$.
Since
$\alpha\left(\mathbf{c}\right)
\in
\alpha\left(\mathcal{C}\right)$,
we only need to show that
$\alpha\left(\mathbf{c}\right)
\notin
\alpha\left(\mathcal{C}\right)^{\perp}$.
If $\mathbf{c}$ has an odd number of non-annihilators, then
wt$\left(\alpha\left(\mathbf{c}\right)\right)$ is odd.
Thus,
$\langle \alpha\left(\mathbf{c}\right) , 
\alpha\left(\mathbf{c}\right) \rangle \neq0$.
Therefore,
$\alpha\left(\mathbf{c}\right)
\notin
\left(\alpha\left(\mathcal{C}\right)\right)^{\perp}$.
If $\mathbf{c}$ has an even number of non-annihilators, then there exists
non-zero
$\mathbf{c}_{1}\in\mathcal{C}$
such that
$(\mathbf{c} , \mathbf{c}_{1}) \neq0$
and
$\alpha\left(\mathbf{c}_{1}\right)$
is non-zero since
$\mathbf{c}$
is not in
$\mathcal{C}^{\perp}$
and
$\mathbf{c}_{1}$
is non-zero.
Since
$(\mathbf{c} , \mathbf{c}_{1}) \neq0$,
non-annihilators in
$\mathbf{c}$
overlap at odd places with those in
$\mathbf{c}_{1}$.
Then it follows that
$\langle\alpha\left(\mathbf{c}\right) ,
\alpha\left(\mathbf{c}_{1}\right) \rangle \neq 0$.
Thus, we have found a codeword
$\alpha\left(\mathbf{c}_{1}\right)
\in
\alpha\left(\mathcal{C}\right)$
such that
$\langle\alpha\left(\mathbf{c}\right) ,
\alpha\left(\mathbf{c}_{1}\right) \rangle
\neq0$.
Therefore,
$\alpha\left(\mathbf{c}\right)
\notin
\left(\alpha\left(\mathcal{C}\right)\right)^{\perp}$.
Hence,
$\alpha\left(\mathcal{C}\right)
\cap
\left(\alpha\left(\mathcal{C}\right)\right)^{\perp}
=
\left\{\mathbf{0}\right\}$.
 Since the additive group of $I_{2}$ is isomorphic to $(\mathbb{F}_{2})^{2}$, every additive subgroup of $I_{2}^{n}$ is naturally an $\mathbb{F}_{2}$-vector space. Hence, $|\mathcal{C}|=2^{k}$ for some integer $k$.
Since 
$\left|\mathcal{C}\right|\left|\mathcal{C}^{\perp}\right|=4^{n} = 2^{2n}$,
we have
$\left|\mathcal{C}^{\perp}\right|
=
2^{2n-k}$.
 Because $0$ corresponds to $0$ or $b$ and $1$ to $a$ or $c$,
each element of
$\alpha\left(\mathcal{C}\right)^{\perp}$
corresponds to
$2^{n}$
elements of
$\mathcal{C}^{\perp}$.
Thus,
$\left|(\alpha\left(\mathcal{C}\right))^{\perp}\right|
=
2^{n-k}$.
We note that for $\mathcal{C}$ to be an ACD code,
$\left|\mathcal{C}\right|$
has to be equal to
$\left|\alpha\left(\mathcal{C}\right)\right|$.
We will show that
$\left|\mathcal{C}\right|
=
\left|\alpha\left(\mathcal{C}\right)\right|$.
We note that since $\alpha$ is a reduction map,
$\left|\mathcal{C}\right|
\geq
\left|\alpha\left(\mathcal{C}\right)\right|$.
Suppose, for a contradiction, that
$\left|\mathcal{C}\right|
>
\left|\alpha\left(\mathcal{C}\right)\right|$.
Then for a certain
$\alpha\left(\mathbf{c}\right)
\in
\alpha\left(\mathcal{C}\right)$,
there exist at least two distinct codewords
$\mathbf{c},\mathbf{c}_{1}
\in
\mathcal{C}$
such that
$\alpha\left(\mathbf{c}\right)
=
\alpha\left(\mathbf{c}_{1}\right)$.
Thus,
$\alpha\left(\mathbf{c}\right)
+
\alpha\left(\mathbf{c}_{1}\right)
=
0$.
Then
$\mathbf{c}+\mathbf{c}_{1}$
is non-zero and  all of its coordinates belong to $\{ 0 , b\}$.
This non-zero
$\mathbf{c}+\mathbf{c}_{1}$
is then in both
$\mathcal{C}$
and
$\mathcal{C}^{\perp}$,
which is a contradiction.
Therefore,
$\left|\mathcal{C}\right|
=
\left|\alpha\left(\mathcal{C}\right)\right|$.
Thus,
$\left|\alpha\left(\mathcal{C}\right)\right|
\left|(\alpha\left(\mathcal{C}\right))^{\perp}\right|
=
2^{n}$.

For the other direction, assume that
$\alpha\left(\mathcal{C}\right)
\cap
\left(\alpha\left(\mathcal{C}\right)\right)^{\perp}
=
\left\{\mathbf{0}\right\}$,
$\left|\alpha\left(\mathcal{C}\right)\right|
\left|(\alpha\left(\mathcal{C}\right))^{\perp}\right|
=
2^{n}$
and
$\left|\alpha\left(\mathcal{C}\right)\right|
=
\left|\mathcal{C}\right|$.
We will show that
$\mathcal{C}\cap\mathcal{C}^{\perp}
=
\left\{\mathbf{0}\right\}$
and
$\left|\mathcal{C}\right|
\left|\mathcal{C}^{\perp}\right|
=
4^{n}$.
Since
$\alpha\left(\mathcal{C}\right)
\cap
\left(\alpha\left(\mathcal{C}\right)\right)^{\perp}
=
\left\{\mathbf{0}\right\}$,
no non-zero element of
$\alpha\left(\mathcal{C}\right)$
is in
$\left(\alpha\left(\mathcal{C}\right)\right)^{\perp}$.
Also, since
$\left|\alpha\left(\mathcal{C}\right)\right|
=
\left|\mathcal{C}\right|$,
there is only one non-zero element of
$\mathcal{C}$
that corresponds to each non-zero element of
$\alpha\left(\mathcal{C}\right)$.
Suppose
$\alpha\left(\mathbf{c}\right)
\in
\alpha\left(\mathcal{C}\right)$
is non-zero.
We shall show that
$\mathbf{c}
\notin
\mathcal{C}
\cap
\mathcal{C}^{\perp}$.
Since
$\mathbf{c}
\in
\mathcal{C}$,
we only need to show that
$\mathbf{c}
\notin
\mathcal{C}^{\perp}$.
If
$\alpha\left(\mathbf{c}\right)$
 contains an odd number of coordinates equal to $1$, then
$\mathbf{c}$
has an odd number of non-annihilators.
Thus,
$( \mathbf{c} , \mathbf{c} ) \neq0$.
Therefore,
$\mathbf{c}
\notin
\mathcal{C}^{\perp}$.
If
$\alpha\left(\mathbf{c}\right)$
 contains an even number of coordinates equal to $1$, then there exists
$\mathbf{c}_{1}\in\mathcal{C}$
such that
$\langle \alpha\left(\mathbf{c}\right)
,
\alpha\left(\mathbf{c}_{1}\right) \rangle
\neq0$.
Since
$\langle \alpha\left(\mathbf{c}\right)
,
\alpha\left(\mathbf{c}_{1}\right) \rangle
\neq0$,
 the coordinates equal to $1$ in $\alpha\left(\mathbf{c}\right)$ and $\alpha\left(\mathbf{c}_1 \right)$ overlap in an odd number of positions..
Then it follows that
$( \mathbf{c} , \mathbf{c}_{1} )
=
b
\neq0$.
Thus, we have found a codeword
$\mathbf{c}_{1}
\in
\mathcal{C}$
such that
$( \mathbf{c} , \mathbf{c}_{1} )
\neq0$.
Therefore,
$\mathbf{c}
\notin
\mathcal{C}^{\perp}$.
Thus,
$\mathcal{C}
\cap
\mathcal{C}^{\perp}
=
\left\{\mathbf{0}\right\}$.
Now, we have
$\left|\alpha\left(\mathcal{C}\right)\right|
=
\left|\mathcal{C}\right|
=
2^{k}$
for some $k$.
Thus,
$\left|(\alpha\left(\mathcal{C}\right))^{\perp}\right|
=
2^{n-k}$.
As observed earlier,
because $0$ corresponds to $0$ or $b$ and $1$ to $a$ or $c$, each element of
$\alpha\left(\mathcal{C}\right)^{\perp}$
corresponds to
$2^{n}$
elements of
$\mathcal{C}^{\perp}$.
Thus,
$\left|\mathcal{C}^{\perp}\right|
=
2^{2n-k}$.
Therefore,
$\left|\mathcal{C}\right|
\left|\mathcal{C}^{\perp}\right|
=
4^{n}$.
\end{proof}
}

We can tell from \Cref{thm: ACD iff LCD I2-code} that there exists an ACD code of length $n$ with cardinality $2^{k}$. This ACD code is the preimage of a binary $[n,k,d]$ LCD code under the map $\alpha$. In the following example, we show that a simple code of length 1 is ACD using \Cref{thm: ACD iff LCD I2-code}. Note that throughout this section, we only deal with codes $\mathcal{C}$ that satisfy the condition $\left|\alpha\left(\mathcal{C}\right)\right|=\left|\mathcal{C}\right|$.

\begin{example}
For $\mathcal{C}=\left\{ 0,a\right\} $,
we note that
$\alpha\left(\mathcal{C}\right)=\left\{ 0,1\right\}$ and
$\alpha\left(\mathcal{C}\right)^{\perp}=\left\{ 0\right\} $.
Then we have
$\left|\alpha\left(\mathcal{C}\right)\right|\left|(\alpha\left(\mathcal{C}\right))^{\perp}\right|=2^{1}$ and
$\alpha\left(\mathcal{C}\right) \cap \alpha\left(\mathcal{C}\right)^{\perp}=\left\{ \mathbf{0}\right\} $.
Thus, $\alpha\left(\mathcal{C}\right)$ is a binary LCD code. By \Cref{thm: ACD iff LCD I2-code}, $\mathcal{C}$ is an ACD code.
\end{example}

From now on, we classify ACD codes for $n=1, 2, 3$ and partially for $n=4, 5$ using \Cref{thm: ACD iff LCD I2-code} together with the classification of binary LCD codes \cite{Harada}.
The next theorem classifies ACD codes for $n=1$.
\begin{theorem}
For $n=1$, there are only three ACD codes over $I_{2}$.
\end{theorem}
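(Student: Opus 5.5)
The plan is to enumerate every additive code of length $1$ and test each one with \Cref{thm: ACD iff LCD I2-code}. Since the additive group of $I_{2}$ is the Klein four group, every additive subgroup of $I_{2}$ has order $1$, $2$ or $4$. The order-$2$ subgroups are generated by a single non-zero element, so the complete list of additive codes of length $1$ is
\[
\{0\},\quad \{0,a\},\quad \{0,b\},\quad \{0,c\},\quad I_{2}.
\]
I will apply the criterion of \Cref{thm: ACD iff LCD I2-code} to each: $\mathcal{C}$ is ACD if and only if $\alpha(\mathcal{C})$ is a binary LCD code and $|\alpha(\mathcal{C})|=|\mathcal{C}|$.

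First I will discard the codes that fail the cardinality condition.
\begin{itemize}
\item For $\{0,b\}$ we have $\alpha(\{0,b\})=\{0\}$, which has size $1\neq 2$.
\item For $I_{2}$ we have $\alpha(I_{2})=\mathbb{F}_{2}$, which has size $2\neq 4$.
\end{itemize}
So neither is ACD. As a cross-check, $\{0,b\}$ is exactly the code $J$ of \Cref{ex: I2-code J}, already shown there to meet its dual non-trivially.

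Next I will verify the remaining three codes.
\begin{itemize}
\item For $\{0,a\}$ and $\{0,c\}$, $\alpha$ maps the code bijectively onto $\mathbb{F}_{2}$. Since $\mathbb{F}_{2}^{\perp}=\{0\}$, the image is LCD: $|\mathbb{F}_{2}|\,|\{0\}|=2^{1}$ and the intersection with its dual is trivial. The case $\{0,a\}$ is exactly the preceding example, and $\{0,c\}$ is identical because $\alpha(c)=1$.
\item For the zero code, $\alpha(\{0\})=\{0\}$ has dual $\mathbb{F}_{2}$, so it is LCD with matching cardinality. Directly, $\{0\}^{\perp}=I_{2}$, so $|\{0\}|\,|I_{2}|=4^{1}$ and the intersection is $\{0\}$.
\end{itemize}
Hence exactly $\{0\}$, $\{0,a\}$ and $\{0,c\}$ are ACD, giving three codes.

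The only real subtlety is counting conventions. The count of three requires including the zero code, so I will state explicitly that the trivial code is counted. I will also note that $\{0,a\}$ and $\{0,c\}$ are listed as distinct codes: for $n=1$, permutation (equivalently monomial) equivalence cannot identify them. Everything else is a direct finite check.
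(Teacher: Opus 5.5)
Your proof is correct and is essentially the paper's argument run in the opposite direction. The paper starts from the two binary LCD codes of length $1$ and lifts them, checking which $2$-element preimages of $\{0,1\}$ are additive; you instead list the five additive subgroups of $I_{2}$ and filter them through \Cref{thm: ACD iff LCD I2-code}, which has the minor advantage of explicitly excluding $\{0,b\}$ and $I_{2}$ via the cardinality condition $|\alpha(\mathcal{C})|=|\mathcal{C}|$ rather than leaving that exclusion implicit.
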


\begin{proof}
We classify ACD codes of length 1 using \Cref{thm: ACD iff LCD I2-code} as follows. There are only two binary LCD codes of length 1. One is $\left\{ 0\right\} $. The corresponding ACD code over $I_{2}$ is $\left\{ 0\right\} $. The other binary LCD code is $\alpha\left(\mathcal{C}\right)=\left\{ 0,1\right\} $. There are four codes of length 1 over $I_{2}$ that correspond to $\alpha\left(\mathcal{C}\right)$, namely $\left\{ 0,a\right\} $, $\left\{ 0,c\right\} $, $\left\{ b,a\right\} $ and $\left\{ b,c\right\} $. Among these, $\left\{ 0,a\right\} $ and $\left\{ 0,c\right\} $ are additive and ACD codes.
\end{proof}

In the following example, we show that a simple code of length 2 is ACD using \Cref{thm: ACD iff LCD I2-code}.
\begin{example}
For $\mathcal{C}=\left\{ 00, a0\right\} $,
we note that
$\alpha\left(\mathcal{C}\right)=\left\{ 00, 10\right\}$ and
$\alpha\left(\mathcal{C}\right)^{\perp}=\left\{ 00, 01\right\} $.
Then we have
$\left|\alpha\left(\mathcal{C}\right)\right|\left|(\alpha\left(\mathcal{C}\right))^{\perp}\right|=2^{2}$ and
$\alpha\left(\mathcal{C}\right) \cap \alpha\left(\mathcal{C}\right)^{\perp}=\left\{ \mathbf{0}\right\} $.
Hence, $\alpha\left(\mathcal{C}\right)$ is a binary LCD code. By \Cref{thm: ACD iff LCD I2-code}, $\mathcal{C}$ is an ACD code.
\end{example}

The next theorem classifies ACD codes for $n=2$.
\begin{theorem}
For $n=2$, there are exactly 15 ACD codes up to equivalence over $I_{2}$.
\end{theorem}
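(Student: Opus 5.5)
By \Cref{thm: ACD iff LCD I2-code}, the plan is to reduce the classification to two tasks: listing the binary LCD codes $D$ of length $2$, and counting, up to coordinate permutation, the additive codes $\mathcal{C}\subseteq I_2^2$ with $\alpha(\mathcal{C})=D$ and $|\mathcal{C}|=|D|$.

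\textbf{Structure of the lifts.} The condition $|\mathcal{C}|=|D|$ means that $\alpha$ restricted to $\mathcal{C}$ is a group isomorphism onto $D$. So $\mathcal{C}$ is the image of an additive section $s\colon D\to I_2^2$ with $\alpha\circ s=\mathrm{id}$. Such a section is determined by choosing, for each vector in a basis of $D$, a preimage: in each coordinate, a $1$ lifts to $a$ or $c$ and a $0$ lifts to $0$ or $b$. Distinct choices give distinct codes, because the lift of each element of $D$ inside $\mathcal{C}$ is unique. Hence a $k$-dimensional $D$ has exactly $2^{2k}$ lifts. A coordinate permutation $\sigma$ sends lifts of $D$ bijectively to lifts of $\sigma(D)$. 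Since $\alpha$ commutes with $\sigma$, two ACD codes can be equivalent only if their reductions are equivalent. Therefore I count orbits of lifts separately for each equivalence class of $D$.

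\textbf{Binary LCD codes of length $2$.} In dimension $0$ there is $\{00\}$. In dimension $1$, the code $\langle 11\rangle$ is self-orthogonal and so not LCD, while $\langle 10\rangle$ and $\langle 01\rangle$ are LCD and equivalent to each other. In dimension $2$ there is $\mathbb{F}_2^2$. This gives three classes, with representatives $\{00\}$, $\langle 10\rangle$ and $\mathbb{F}_2^2$.

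\textbf{Counting orbits.}
\begin{itemize}
\item The zero code contributes $1$.
\item For $D=\langle 10\rangle$, the lifts are $\{00,xy\}$ with $x\in\{a,c\}$ and $y\in\{0,b\}$, giving $4$ codes. The only permutation preserving $D$ is the identity, so these are pairwise inequivalent and contribute $4$ classes. The lifts of $\langle 01\rangle$ are exactly their swaps.
\item For $D=\mathbb{F}_2^2$, the lifts are generated by $(x,y)$ and $(z,w)$ with $x,w\in\{a,c\}$ and $y,z\in\{0,b\}$, giving $16$ codes. The swap acts on the parameters by $(x,y,z,w)\mapsto(w,z,y,x)$. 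Its fixed points satisfy $x=w$ and $y=z$, so there are $4$ of them. By Burnside's lemma there are $(16+4)/2=10$ orbits.
\end{itemize}
The total is $1+4+10=15$.

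\textbf{Main obstacle.} The delicate step is the correctness of the counting, not the arithmetic. One must make sure that:
\begin{itemize}
\item the parameter tuple genuinely determines the code, which follows from the uniqueness of lifts;
\item the swap acts on parameters exactly as written, after re-identifying which generator lifts $10$ and which lifts $01$;
\item no hidden equivalence remains, since for $I_2$ monomial equivalence is just permutation equivalence.
\end{itemize}
I would verify these points directly, and then list the $15$ representatives explicitly as a check.
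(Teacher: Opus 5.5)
Your proposal is correct and follows the paper's route: it uses \Cref{thm: ACD iff LCD I2-code} to reduce to counting lifts of the three binary LCD classes $\{00\}$, $\langle 10\rangle$, $\mathbb{F}_2^2$, giving $1+4+10=15$. The only difference is that you get the $10$ from Burnside's lemma on the $16$ lifts of $\mathbb{F}_2^2$, whereas the paper enumerates them by hand as $4+3+2+1$, and you state explicitly the facts the paper leaves implicit: distinct choices of lifts give distinct codes, and equivalence preserves the reduction class.
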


\begin{proof}
We divide the proof into three parts depending on the cardinality of a code.
\begin{enumerate}

\item[{(i)}] cardinality one
\\
$\left\{ 00\right\} $ is a trivial ACD code.
\item[{(ii)}] cardinality two
\\
According to \cite{Harada}, there is only one binary $[2,1]$ LCD code up to equivalence, which is $\alpha\left(\mathcal{C}\right)=\left\{ 00, 10\right\} $. By \Cref{thm: ACD iff LCD I2-code}, ACD codes $\mathcal{C}$ that correspond to this code are exactly $\left\{ 00,a0\right\} $, $\left\{ 00,c0\right\} $, $\left\{ 00,ab\right\} $, $\left\{ 00,cb\right\} $.

\item[{(iii)}] cardinality four
\\
The next possible cardinality of an additive code is four because the addition of two non-zero codewords results in a third non-zero codeword. According to \cite{Harada},  $\alpha\left(\mathcal{C}\right)=\left\{ 00, 10, 01, 11\right\} $ is also a trivial binary $[2,2]$ LCD code. To count the number of ACD codes that correspond to this code using \Cref{thm: ACD iff LCD I2-code}, we note that $a0$ corresponds to 10 and there are four choices that correspond to 01, namely $0a$, $0c$, $ba$ and $bc$. Also, $ab$ also corresponds to 10, but we have only three choices for 01 because $ab$ and $0a$ are the same as $ba$ and $a0$ when permuted. Likewise, we have two codes when we correspond $c0$ to 10, and one code when we correspond $cb$ to 10. Therefore, we have altogether ten ACD codes of this type.
\end{enumerate}
This completes the proof.
\end{proof}

There is a trivial ACD code with cardinality one for $n=3$, namely $\left\{ 000\right\} $.
The next theorem counts ACD codes with cardinality two for $n=3$.
\begin{theorem}
For $n=3$, there are exactly 10 ACD codes with cardinality two up to equivalence.
\end{theorem}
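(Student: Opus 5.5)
Since $\mathcal{C}$ has cardinality two and is additive, it has the form $\mathcal{C}=\{\mathbf{0},\mathbf{x}\}$ with $\mathbf{x}\in I_{2}^{3}$ non-zero; the condition $\mathbf{x}+\mathbf{x}=\mathbf{0}$ holds automatically because $I_2$ has characteristic $2$. The plan is to apply \Cref{thm: ACD iff LCD I2-code}, which reduces the ACD condition to two requirements.

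\begin{enumerate}
\item[{(a)}] $|\alpha(\mathcal{C})|=2$, which means $\alpha(\mathbf{x})\neq\mathbf{0}$.
\item[{(b)}] $\alpha(\mathcal{C})=\{\mathbf{0},\alpha(\mathbf{x})\}$ is a binary $[3,1]$ LCD code.
\end{enumerate}

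For a one-dimensional binary code generated by $\mathbf{v}$, the intersection $\langle\mathbf{v}\rangle\cap\langle\mathbf{v}\rangle^{\perp}$ is trivial exactly when $\langle\mathbf{v},\mathbf{v}\rangle\neq 0$, i.e.\ when $\operatorname{wt}(\mathbf{v})$ is odd. The size condition $|\alpha(\mathcal{C})|\,|\alpha(\mathcal{C})^{\perp}|=2\cdot 2^{2}=2^{3}$ is then automatic. This matches the classification in \cite{Harada}: up to equivalence, the binary $[3,1]$ LCD codes are generated by $100$ and by $111$. So $\mathcal{C}=\{\mathbf{0},\mathbf{x}\}$ is ACD if and only if $\alpha(\mathbf{x})$ has weight $1$ or $3$. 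Equivalently, $\mathbf{x}$ has exactly one or exactly three coordinates in $\{a,c\}$, and its remaining coordinates lie in $\{0,b\}$.

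Next I count these codes up to equivalence. Monomial and permutation equivalence coincide for $I_2$, so $\{\mathbf{0},\mathbf{x}\}$ and $\{\mathbf{0},\mathbf{y}\}$ are equivalent exactly when $\mathbf{y}$ is a coordinate permutation of $\mathbf{x}$. Hence it suffices to count orbits of such vectors $\mathbf{x}$ under $S_3$, i.e.\ multisets of entries.

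\begin{enumerate}
\item[{(i)}] Weight of $\alpha(\mathbf{x})$ equal to $1$. Place the non-annihilator in the first coordinate, with value $a$ or $c$ (2 choices). The remaining two coordinates form a multiset from $\{0,b\}$: $\{0,0\},\{0,b\},\{b,b\}$ (3 choices). This gives $6$ codes, for instance $\{000,a00\},\{000,ab0\},\{000,abb\}$ and the same three with $c$ in place of $a$.
\item[{(ii)}] Weight of $\alpha(\mathbf{x})$ equal to $3$. Here $\mathbf{x}\in\{a,c\}^{3}$ up to permutation, so the multisets are $aaa,aac,acc,ccc$. This gives $4$ codes.
\end{enumerate}

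Altogether this yields $6+4=10$ inequivalent ACD codes.

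The main point needing care is the equivalence step. One must justify that distinct multisets really give inequivalent codes, which holds because a permutation preserves the multiset of entries of the unique non-zero codeword. One must also check that no additional identification arises: $a$ and $c$ are not interchanged by equivalence, since equivalence here means only coordinate permutation, not a ring automorphism. With both points in hand, the count is exactly $10$.
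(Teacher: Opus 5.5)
Your proposal is correct and takes essentially the same route as the paper. Both reduce via \Cref{thm: ACD iff LCD I2-code} to the two binary $[3,1]$ LCD codes generated by $100$ and $111$, then count lifts up to coordinate permutation to get $6+4=10$. Your multiset count is just a cleaner bookkeeping of the paper's ``$2^{3}$ lifts minus permutation duplicates'' subtraction.
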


\begin{proof}
According to \cite{Harada}, there are only two binary $[3,1]$ LCD codes up to equivalence, one of which is $\alpha\left(\mathcal{C}\right)=\left\{ 000, 111\right\} $. By \Cref{thm: ACD iff LCD I2-code}, there are at most $2^{3}=8$ ACD codes over $I_{2}$ corresponding to this binary LCD code. From this eight we take away four because $acc$ is the same as $cac$ and $cca$ when permuted, and $aac$ is the same as $aca$ and $caa$ when permuted.
According to \cite{Harada}, there is another inequivalent binary $[3,1]$ LCD code $\alpha\left(\mathcal{C}\right)=\left\{ 000, 100\right\} $. By \Cref{thm: ACD iff LCD I2-code}, there are $2^{3}-2=6$ ACD codes over $I_{2}$ corresponding to this binary LCD code. Here, we take away two because $b0$ and $0b$ are the same when permuted.
\end{proof}

The next theorem counts ACD codes with cardinality four for $n=3$. In the proof, we will briefly see the merit of \Cref{thm: ACD iff LCD I2-code} that with one binary LCD code, the number of ACD codes we can find over $I_{2}$ is amplified and upper bounded to a few dozens.
\begin{theorem}
For $n=3$, there are exactly 52 ACD codes with cardinality four up to equivalence.
\end{theorem}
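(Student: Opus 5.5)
The plan is to apply \Cref{thm: ACD iff LCD I2-code} exactly as in the preceding proofs. First we list the binary $[3,2]$ LCD codes up to equivalence. Then, for each of them, we count the additive codes $\mathcal{C}\subseteq I_2^3$ with $|\mathcal{C}|=4$ and $\alpha(\mathcal{C})$ equal to that code. Finally, we reduce these lifts modulo coordinate permutations; for $I_2$, monomial and permutation equivalence coincide.

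\textbf{Step 1: the binary codes.} A binary $[3,2]$ code is LCD if and only if its dual is LCD. The dual is $1$-dimensional, so it is LCD exactly when it is spanned by an odd-weight vector. Up to equivalence this vector is $100$ or $111$, in agreement with \cite{Harada}. So there are exactly two binary $[3,2]$ LCD codes up to equivalence:
\[
D_1=\{000,010,001,011\}, \qquad D_2=\{000,110,101,011\}.
\]

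\textbf{Step 2: the lifts of a fixed binary code.} Fix a basis $\mathbf{u}_1,\mathbf{u}_2$ of $D_i$. An additive $\mathcal{C}$ with $|\mathcal{C}|=|\alpha(\mathcal{C})|=4$ and $\alpha(\mathcal{C})=D_i$ has the form $\{\mathbf{0},\mathbf{x}_1,\mathbf{x}_2,\mathbf{x}_1+\mathbf{x}_2\}$ with $\alpha(\mathbf{x}_j)=\mathbf{u}_j$. Because $\alpha$ is injective on $\mathcal{C}$, the pair $(\mathbf{x}_1,\mathbf{x}_2)$ is uniquely determined by $\mathcal{C}$.

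Each $\mathbf{x}_j$ is chosen coordinatewise: $a$ or $c$ where $\mathbf{u}_j$ has a $1$, and $0$ or $b$ where it has a $0$. This gives $2^3=8$ choices for each $\mathbf{x}_j$, hence exactly $64$ ACD codes lying over each $D_i$. Every one of them is ACD by \Cref{thm: ACD iff LCD I2-code}.

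\textbf{Step 3: reduction by permutations.} Two lifts of $D_1\cup D_2$ are equivalent only if they lie over the same $D_i$ and differ by a permutation in the stabilizer $G_i$ of $D_i$ in $S_3$. The stabilizers are $G_1=\langle(2\,3)\rangle$ of order $2$ and $G_2=S_3$. We then count orbits of $G_i$ on the $64$ lifts with Burnside's lemma.

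For $D_1$, take $\mathbf{u}_1=010$ and $\mathbf{u}_2=001$. The transposition $(2\,3)$ fixes a lift exactly when $\mathbf{x}_2$ is the coordinate swap of $\mathbf{x}_1$. This leaves $8$ fixed lifts, so there are $(64+8)/2=36$ classes.

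For $D_2$, take $\mathbf{u}_1=110$ and $\mathbf{u}_2=101$, noting that $S_3$ permutes the three nonzero codewords $110,101,011$. For each permutation $\sigma$, count the lifts $\{\mathbf{0},\mathbf{x}_1,\mathbf{x}_2,\mathbf{x}_1+\mathbf{x}_2\}$ that are mapped onto themselves. A transposition fixing one weight-two word and swapping the other two imposes a linear condition on $(\mathbf{x}_1,\mathbf{x}_2)$. A $3$-cycle imposes a stronger one. Summing these fixed-point counts over the six elements of $S_3$ and dividing by $6$ should give $16$ classes.

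The total is then $36+16=52$.

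\textbf{Main obstacle.} The hard part is the Burnside computation for $D_2$. A permutation acts on the three lifted words $\mathbf{x}_1,\mathbf{x}_2,\mathbf{x}_1+\mathbf{x}_2$ simultaneously, so its fixed-point condition couples the choices made at the $0$-positions ($0$ versus $b$) with those at the $1$-positions ($a$ versus $c$). I would first parametrize each lift by the six binary choices, namely the $a/c$ and $0/b$ choice in each coordinate of $\mathbf{x}_1$ and of $\mathbf{x}_2$. I would then solve the fixed-point equations for a transposition and for a $3$-cycle as small linear systems over $\mathbb{F}_2$. If the total does not come out to $16$, I would cross-check by direct enumeration of orbit representatives, in the style of the $n=2$ proof.
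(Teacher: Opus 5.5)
Your proof is correct and gives the same split as the paper: $36$ codes over $D_1$ and $16$ over the even-weight code $D_2$. The counting method is different, though.

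The paper's proof is an exhaustive search. For $D_2$ it fixes $aa0$ over $110$, runs through the lifts of $101$, and spots coincidences up to permutation by hand. It then simply asserts the totals $16$ and $36$.

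You instead identify the lifts over $D_i$ with the $64$ pairs $(\mathbf{x}_1,\mathbf{x}_2)$ and reduce equivalence to the action of the stabilizer $G_i\le S_3$. Burnside's lemma then does the counting. This makes the count checkable by hand and explains where the numbers come from, rather than only producing a list of representatives. The same bookkeeping also recovers the paper's computer counts; for example, $104$ for cardinality $8$ arises as $(512+3\cdot 32+2\cdot 8)/6$.

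The one computation you left as ``should give $16$'' does go through:
\begin{itemize}
\item \textbf{Transpositions.} For $\tau=(2\,3)$, which swaps $110$ and $101$ and fixes $011$, injectivity of $\alpha$ on $\mathcal{C}$ shows that a lift is fixed if and only if $\mathbf{x}_2=\tau\mathbf{x}_1$. This gives $8$ fixed lifts. The other two transpositions are conjugate to $\tau$ in $G_2=S_3$, so each also fixes $8$.
\item \textbf{$3$-cycles.} Take the $3$-cycle $\sigma$ with $\sigma(110)=101$. A lift is fixed if and only if $\mathbf{x}_2=\sigma\mathbf{x}_1$ and $\sigma\mathbf{x}_2=\mathbf{x}_1+\mathbf{x}_2$, that is, $\mathbf{x}_1+\sigma\mathbf{x}_1+\sigma^2\mathbf{x}_1=\mathbf{0}$. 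Each coordinate of the left side is the sum of all three coordinates of $\mathbf{x}_1$. The $a$-parts cancel, so the condition says that an even number of coordinates of $\mathbf{x}_1$ lie in $\{b,c\}$. This gives $4$ lifts, namely $\mathbf{x}_1\in\{aa0,cc0,acb,cab\}$. The other $3$-cycle is $\sigma^{-1}$ and has the same fixed lifts.
\end{itemize}
Hence there are $(64+3\cdot 8+2\cdot 4)/6=16$ classes over $D_2$, and the total is $36+16=52$.
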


\begin{proof}
We may use \Cref{thm: ACD iff LCD I2-code} for finding these ACD codes, but it requires an exhaustive search because there are no apparent patterns among these ACD codes. We illustrate this method with a case study below. According to \cite{Harada}, there are only two binary $[3,2]$ LCD codes up to equivalence, one of which is $\alpha\left(\mathcal{C}\right)=\left\{ 000, 110, 101, 011\right\} $. We note that 110 and 101 are the generators, so there are at most $2^{6}$ ACD codes correspoding to this binary LCD code. For classification, choose $aa0$ for 110, then there are 4 vectors that start with $a$ for 101, namely, $a0a$, $a0c$, $aba$ and $abc$, which result in 4 inequivalent ACD codes. There are also 4 vectors that start with $c$ for 101, namely $cba$, $cbc$, $c0a$, $c0c$, the first two of which result in inequivalent ACD codes. But for the remaining 2 vectors, we have codes $\left\{ 000,aa0,c0a,baa\right\} $ and $\left\{ 000,aa0,c0c,bac\right\} $. But these two codes are equivalent to previously obtained codes $\left\{ 000,aa0,aba,0ca\right\} $ and $\left\{ 000,aa0,abc,0cc\right\} $, respectively. One may continue this exhaustive search to show that there are 16 ACD codes.
The other binary $[3,2]$ LCD code up to equivalence is $\alpha\left(\mathcal{C}\right)=\left\{ 000, 100, 010, 110\right\} $. We may use Theorem 3.5 for finding 36 ACD codes corresponding to the LCD code in a similar manner.
\end{proof}

The next theorem gives the number of ACD codes with cardinality eight for $n=3$.
\begin{theorem}
For $n=3$, there are exactly 104 ACD codes with cardinality eight up to equivalence.
\end{theorem}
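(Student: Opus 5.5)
Following the method of the preceding theorems, we would reduce the count to binary LCD codes via \Cref{thm: ACD iff LCD I2-code}. An additive ACD code $\mathcal{C}\subseteq I_2^3$ of cardinality eight must satisfy $|\alpha(\mathcal{C})|=|\mathcal{C}|=8$. Hence $\alpha(\mathcal{C})=\mathbb{F}_2^3$, the trivial $[3,3]$ LCD code, which is the unique binary LCD code of that size. Conversely, every additive $\mathcal{C}$ with $\alpha(\mathcal{C})=\mathbb{F}_2^3$ and $|\mathcal{C}|=8$ is ACD. So the task is to count additive subgroups $\mathcal{C}$ of order $8$ on which $\alpha$ is bijective onto $\mathbb{F}_2^3$, up to coordinate permutation.

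Such a $\mathcal{C}$ is determined by lifts of the three unit vectors $100,010,001$. The lift of $e_i$ has $a$ or $c$ in position $i$ and $0$ or $b$ elsewhere, giving $2\cdot 2^2=8$ choices for each lift. Equivalently, $\mathcal{C}$ is the graph of an $\mathbb{F}_2$-linear map $\mathbb{F}_2^3\to J^3\cong\mathbb{F}_2^3$. Writing $a=xa+0$ and $c=a+b$, each lift is $e_i a + (\text{row } i \text{ of a } 3\times 3 \text{ binary matrix } M)\,b$. Thus $\mathcal{C}\leftrightarrow M\in M_3(\mathbb{F}_2)$, so there are $2^9=512$ labeled codes.

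Permuting coordinates by $\sigma$ with permutation matrix $P$ acts on $M$ by conjugation, $M\mapsto P^{-1}MP$, since the $\mathbb{F}_2$-basis given by the $\alpha$-lifts is permuted simultaneously. The count up to equivalence is then the number of orbits of $S_3$ acting on $M_3(\mathbb{F}_2)$ by simultaneous row and column permutation. Equivalently, it is the number of isomorphism classes of directed graphs on $3$ unlabeled vertices with loops allowed.

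We would evaluate this by Burnside's lemma. The identity fixes $2^9$ matrices. Each of the three transpositions fixes $2^5$ matrices: the cycle structure on the $9$ entries consists of $3$ fixed entries and $3$ swapped pairs, giving $6$ cycles, so it fixes $2^6$; this count must be checked carefully. Each of the two $3$-cycles fixes $2^3$. This gives
\[
\tfrac{1}{6}\left(512+3\cdot 64+2\cdot 8\right)=\tfrac{720}{6}=120.
\]
That disagrees with $104$, so the main obstacle is pinning down the correct equivalence notion and the correct parametrization.

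The likely discrepancy is that the authors also identify codes that coincide as sets but arise from different choices, or that additional symmetries are being used. For example, the roles of $a$ and $c$ can be interchanged, as the paper remarks; this corresponds to $M\mapsto M+D$ for diagonal $D$ applied uniformly. Alternatively, the authors may count the images under the map $M$ differently, since the graph depends only on $M$ itself. I would first recheck the cycle count on entries under a transposition, and then test whether adding the global $a\leftrightarrow c$ symmetry per coordinate reproduces $104$. If neither works, I would fall back on enumerating orbits by the rank and type of $M$, or by a Magma computation, as in the preceding proofs.
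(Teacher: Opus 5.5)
\paragraph{The error.} Your reduction is correct. By Theorem~\ref{thm: ACD iff LCD I2-code}, an ACD code of cardinality eight is exactly a graph $\{xa+(xM)b : x\in\mathbb{F}_2^3\}$ with $M\in M_3(\mathbb{F}_2)$. Coordinate permutations act by $M\mapsto P^{-1}MP$, and the count is the number of $S_3$-orbits on $M_3(\mathbb{F}_2)$. The gap is an arithmetic slip in the cycle count for a transposition, which makes the proposal end at $120$ instead of proving the statement.

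For $\sigma=(1\,2)$, an entry $(i,j)$ is fixed only if both $i$ and $j$ are fixed by $\sigma$, so only $(3,3)$ is fixed. The other eight entries form four $2$-cycles: $(1,1)\leftrightarrow(2,2)$, $(1,2)\leftrightarrow(2,1)$, $(1,3)\leftrightarrow(2,3)$ and $(3,1)\leftrightarrow(3,2)$. That gives $5$ cycles and $2^5=32$ fixed matrices, your first figure, not $2^6$. Your $3$-cycle count ($3$ cycles, $8$ fixed matrices) is right. Burnside then gives
\[
\tfrac{1}{6}\left(512+3\cdot 32+2\cdot 8\right)=\tfrac{624}{6}=104,
\]
exactly the stated number. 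Drop the remedies in your last paragraph. The paper's equivalence is coordinate permutation only. Adding the $a\leftrightarrow c$ symmetry ($M\mapsto M+D$ with $D$ diagonal) would merge orbits, for instance those of $M=0$ and $M=I$, each of which is a single orbit under conjugation. That would push the count below $104$.

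\paragraph{Comparison with the paper.} With this correction your argument is a complete proof, by a genuinely different route. The paper's proof only reports an exhaustive Magma search. Your route is computer-free and structural: ACD codes over $I_2$ of length $n$ and cardinality $2^n$ correspond to binary relations (digraphs with loops allowed) on $n$ unlabeled vertices. This also explains the paper's other counts in the same family, namely $2$ codes for $n=1$ (cardinality $2$) and $10$ codes for $n=2$ (cardinality $4$). It predicts $3044$ for $n=4$ with cardinality $16$, a case the paper does not treat.
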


\begin{proof}
According to a Magma computation with exhaustive search, there are exactly 104 inequivalent ACD codes of this type.
\end{proof}

Next, we give the number of ACD codes when $n=4$ with cardinality two.
\begin{theorem}
For $n=4$, there are exactly 16 ACD codes over $I_{2}$ with cardinality two up to equivalence.
\end{theorem}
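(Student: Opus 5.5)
The plan is to reduce everything to \Cref{thm: ACD iff LCD I2-code} and then count orbits of single generators under coordinate permutations. An additive code of cardinality two over $I_{2}$ has the form $\mathcal{C}=\left\{\mathbf{0},\mathbf{x}\right\}$ with $\mathbf{x}\neq\mathbf{0}$. Conversely, every such set is additive, because $\mathbf{x}+\mathbf{x}=\mathbf{0}$ in characteristic 2. By \Cref{thm: ACD iff LCD I2-code}, $\mathcal{C}$ is ACD if and only if $\left|\alpha(\mathcal{C})\right|=2$ and $\alpha(\mathcal{C})=\left\{\mathbf{0},\alpha(\mathbf{x})\right\}$ is a binary LCD code. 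The first condition just says $\alpha(\mathbf{x})\neq\mathbf{0}$. Every binary linear code satisfies $|D||D^{\perp}|=2^{n}$, so the nice condition is automatic. Hence the LCD condition reduces to $\alpha(\mathbf{x})\notin\alpha(\mathcal{C})^{\perp}$, i.e. $\langle\alpha(\mathbf{x}),\alpha(\mathbf{x})\rangle\neq0$. This means $\operatorname{wt}(\alpha(\mathbf{x}))$ is odd, equivalently $\mathbf{x}$ has an odd number $N(\mathbf{x})$ of non-annihilators. For $n=4$ this gives $N(\mathbf{x})\in\{1,3\}$, in agreement with \cite{Harada}: the binary $[4,1]$ LCD codes up to equivalence are generated by $1000$ and $1110$.

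So the ACD codes of cardinality two are exactly $\left\{\mathbf{0},\mathbf{x}\right\}$ where $\mathbf{x}\in I_{2}^{4}$ has an odd number of coordinates in $\{a,c\}$, with the remaining coordinates in $\{0,b\}$. Two such codes are equivalent if and only if their nonzero words differ by a coordinate permutation; recall that monomial and permutation equivalence coincide over $I_{2}$. A vector up to permutation is determined by the multiset of its entries, so it suffices to count multisets of size $4$ over $\{0,a,b,c\}$ containing an odd number of entries from $\{a,c\}$.

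The count splits into two cases.
\begin{enumerate}
\item[(i)] $N(\mathbf{x})=1$. The single non-annihilator is $a$ or $c$ (2 choices). The other three coordinates contain $0,1,2$ or $3$ entries equal to $b$ (4 choices). This gives $8$ codes.
\item[(ii)] $N(\mathbf{x})=3$. The three non-annihilators contain $0,1,2$ or $3$ entries equal to $c$ (4 choices). The remaining coordinate is $0$ or $b$ (2 choices). This gives $8$ codes.
\end{enumerate}
The total is $16$ codes, and distinct multisets give inequivalent codes. No step is a real obstacle. The only point needing care is that $\alpha(\mathcal{C})$ is automatically nice, so \Cref{thm: ACD iff LCD I2-code} reduces to the single parity condition on $N(\mathbf{x})$.
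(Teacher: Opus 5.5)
Your proof is correct and follows essentially the same route as the paper. Both use Theorem~\ref{thm: ACD iff LCD I2-code} to reduce to the two binary $[4,1]$ LCD codes generated by $1000$ and $1110$, then count preimage generators up to permutation to get $8+8=16$; the differences are that you derive the odd-weight criterion directly instead of citing \cite{Harada}, and you write out the weight-three count (the multisets $aaa$, $aac$, $acc$, $ccc$, each with a remaining $0$ or $b$) that the paper covers with ``similarly''.
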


\begin{proof}
According to \cite{Harada}, there are only two binary $[4,1]$ LCD codes up to equivalence. One is $\alpha\left(\mathcal{C}\right)=\left\{ 0000, 1000\right\} $ and the other is $\alpha\left(\mathcal{C}\right)=\left\{ 0000, 1110\right\} $. By \Cref{thm: ACD iff LCD I2-code}, there are at most $2^{4}$ ACD codes over $I_{2}$ corresponding to the first one, and another $2^{4}$ ACD codes corresponding to the second one. Among the $2^{4}$ ACD codes corresponding to the first one, $ab00$ is equivalent to $a0b0$ and $a00b$. Likewise for $cb00$, $abb0$, and $cbb0$. Thus, there are eight inequivalent codes, namely, $a000$, $c000$, $ab00$, $cb00$, $abb0$, $cbb0$, $abbb$, $cbbb$, each combined with $0000$. Similarly, there are eight inequivalent ACD codes corresponding to the second binary LCD code. Therefore, there are 16 inequivalent ACD codes over $I_{2}$ with cardinality two.
\end{proof}

The next theorem gives the number of ACD codes when $n=4$ with cardinality four which is obtained with the help of Magma computation.
\begin{theorem}
For $n=4$, there are exactly 308 ACD codes over $I_{2}$ with cardinality four.
\end{theorem}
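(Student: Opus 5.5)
The count will come from \Cref{thm: ACD iff LCD I2-code}: an additive code $\mathcal{C}\subseteq I_2^4$ of cardinality four is ACD exactly when $\alpha(\mathcal{C})$ is a binary $[4,2]$ LCD code and $\alpha$ is injective on $\mathcal{C}$. The first step is to list, from the classification in \cite{Harada}, the binary $[4,2]$ LCD codes up to equivalence. These are the $2$-dimensional subspaces $D\subseteq\mathbb{F}_2^4$ with $D\cap D^{\perp}=\{\mathbf{0}\}$, equivalently those whose $2\times2$ Gram matrix $GG^{T}$ is nonsingular over $\mathbb{F}_2$.

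The second step fixes such a $D$ with generator rows $\mathbf{g}_1,\mathbf{g}_2$ and describes every ACD code over $D$. Since $\alpha$ is an injective group homomorphism on $\mathcal{C}$, the code is $\mathcal{C}=\{\mathbf{0},\mathbf{x}_1,\mathbf{x}_2,\mathbf{x}_1+\mathbf{x}_2\}$ with $\alpha(\mathbf{x}_i)=\mathbf{g}_i$. Each $\mathbf{x}_i$ is determined coordinatewise: a $1$ in $\mathbf{g}_i$ lifts to $a$ or $c$, and a $0$ lifts to $0$ or $b$. Conversely, every such pair of lifts gives an additive code on which $\alpha$ is injective, so it is ACD by the theorem. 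This gives exactly $2^{4}\cdot2^{4}=2^{8}$ ACD codes lying over each labelled $D$, and hence an explicit finite family for each class of $D$.

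The third step counts these codes up to the equivalence used in this section, namely coordinate permutations. Codes over inequivalent $D$'s cannot be equivalent, since a permutation commutes with $\alpha$. For a fixed $D$, two lifted codes are equivalent only through a permutation lying in the stabilizer $\mathrm{Aut}(D)\le S_4$, possibly combined with a change of the chosen generators of $D$. So the work reduces to counting orbits of $\mathrm{Aut}(D)$ acting on the set of lifted codes, where each lifted code is an unordered set, not an ordered pair $(\mathbf{x}_1,\mathbf{x}_2)$. I would compute this count either with Burnside's lemma, counting fixed codes for each permutation in $\mathrm{Aut}(D)$, or, more reliably, by a Magma orbit computation. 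The statement's count of $308$ is the sum of these orbit counts over the classes of $D$.

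The main obstacle is this orbit count. The subtlety is that a permutation can map the code over one generator pair to a code over a different generator pair of the same $D$. To avoid double counting, I would work with canonical forms: for each of the $2^8$ codes, compute its three nonzero codewords, take the lexicographically least image under the permutations in $\mathrm{Aut}(D)$, and count the distinct results. This is exactly the exhaustive computation the statement attributes to Magma. By hand, I would check it on the simplest class of $D$, the one spanned by $1000$ and $0100$, to confirm the method before trusting the machine totals.
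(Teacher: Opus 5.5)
Your proposal is correct and follows the paper's proof: \Cref{thm: ACD iff LCD I2-code}, the four classes of binary $[4,2]$ LCD codes from \cite{Harada}, $2^{8}$ lifts over each class, and an exhaustive computation giving $308$. Your version is a little sharper than the paper's bound of at most $4\cdot 2^{8}$, since you note there are exactly $2^{8}$ codes over each fixed $D$ and that only $\mathrm{Aut}(D)$ matters. In fact your Burnside route closes by hand without Magma. The stabilizers of $\langle 1000,0100\rangle$, $\langle 1100,1010\rangle$, $\langle 1000,0111\rangle$ and $\langle 1100,1011\rangle$ are $\langle(12),(34)\rangle$, $S_{\{1,2,3\}}$, $S_{\{2,3,4\}}$ and $\langle(12),(34)\rangle$, and averaging the fixed-lift counts gives $352/4=88$, $312/6=52$, $480/6=80$ and $352/4=88$ orbits, which sum to $308$.
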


\begin{proof}
According to \cite{Harada}, there are four binary $[4,2]$ LCD codes up to equivalence. One is $\alpha\left(\mathcal{C}\right)=\left\{ 0000, 1000, 0100, 1100\right\} $. By \Cref{thm: ACD iff LCD I2-code} and the fact that 1000 and 0100 are generators, there are at most $2^{8}$ ACD codes over $I_{2}$ corresponding to this binary LCD code. The other three binary $[4,2]$ LCD codes are $\left\{ 0000, 1000, 0111, 1111\right\} $, $\left\{ 0000, 1100, 1010, 0110\right\} $, $\left\{ 0000, 1100, 1011, 0111\right\} $. By a similar reasoning as before, there are at most $2^{8}$ ACD codes over $I_{2}$ corresponding to each of these binary LCD codes. Therefore, there are altogether at most $4\cdot2^{8}$ ACD codes over $I_{2}$ with cardinality four.
An exhaustive Magma search shows that there are exactly 308 ACD codes over $I_{2}$ with cardinality four.
\end{proof}

The next theorem gives the number of ACD codes with cardinality eight for $n=4$.
\begin{theorem}
For $n=4$, there are exactly 1504 ACD codes with cardinality eight up to equivalence.
\end{theorem}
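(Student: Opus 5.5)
The plan is to reduce the count to an enumeration over binary LCD codes via \Cref{thm: ACD iff LCD I2-code}, exactly as in the preceding results for $n=3$ and $n=4$. By that theorem, an additive code $\mathcal{C}\subseteq I_2^4$ with $|\mathcal{C}|=8$ is ACD if and only if $\alpha(\mathcal{C})$ is a binary $[4,3]$ LCD code and $|\alpha(\mathcal{C})|=|\mathcal{C}|$. So the first step is to list, from the classification in \cite{Harada}, the binary $[4,3]$ LCD codes up to equivalence. By duality, a code is LCD iff its dual is, so these correspond to the $[4,1]$ LCD codes $\langle 1000\rangle$ and $\langle 1110\rangle$ used earlier. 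Their duals are the two $[4,3]$ codes $\langle 0100,0010,0001\rangle$ and $\langle 1100,1010,0001\rangle$ (up to equivalence).

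For each such binary code $D$ with generator rows $\mathbf{g}_1,\mathbf{g}_2,\mathbf{g}_3$, the additive codes $\mathcal{C}$ with $\alpha(\mathcal{C})=D$ and $|\mathcal{C}|=8$ are exactly the $\mathbb{F}_2$-spans of lifts $\mathbf{c}_1,\mathbf{c}_2,\mathbf{c}_3$ of the $\mathbf{g}_i$. Here each coordinate equal to $1$ lifts to $a$ or $c$, and each coordinate equal to $0$ lifts to $0$ or $b$.

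\begin{itemize}
\item Such a span is automatically of size $8$ and injective under $\alpha$, because $\alpha$ restricted to it is a surjective group homomorphism onto a group of order $8$.
\item Distinct lifts of a fixed generator differ by a vector in $\{0,b\}^4$, so each choice determines a distinct code. This gives exactly $2^{4}$ choices per generator, hence $2^{12}$ codes for each $D$ before identifying equivalent ones.
\end{itemize}

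Every one of these codes is ACD by \Cref{thm: ACD iff LCD I2-code}, so no further ACD test is needed. The remaining task is purely to count orbits under coordinate permutations. Since we need orbits of codes, not of generator tuples, we must also account for equivalences that mix the two binary classes. This cannot happen, because an equivalence of $I_2$-codes induces an equivalence of their $\alpha$-images.

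So the count splits as (number of inequivalent lifts of $D_1$) $+$ (number for $D_2$). For each $D$, the relevant symmetry is the subgroup of $S_4$ preserving $D$ setwise, acting on the $2^{12}$ lifted codes. I would compute the orbit count by Burnside's lemma, or more reliably by a Magma exhaustive search. The search would generate all $2\cdot 2^{12}$ codes, compute a canonical form of each (for instance, the lexicographically smallest sorted codeword list over all $24$ coordinate permutations), and count the distinct forms. The expected total is $1504$.

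The main obstacle is the orbit count itself. Burnside by hand requires, for each permutation $\sigma$ in the stabilizer of $D$, the number of lifted codes $\mathcal{C}$ with $\mathcal{C}\sigma=\mathcal{C}$. This is not simply a count of fixed generator tuples, since $\sigma$ may permute generators of $D$ among themselves. I would therefore rely on the computer enumeration, as was done for the cardinality-eight case when $n=3$, and use Burnside only as a sanity check on one of the two classes.
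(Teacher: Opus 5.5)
Your proposal is correct and is essentially the paper's own argument, which consists only of an exhaustive Magma search; your reduction via \Cref{thm: ACD iff LCD I2-code} to the $2^{12}$ lifts of each of the two $[4,3]$ LCD codes, with orbits taken under their stabilizers (both $\cong S_3$), is the natural way to organize that search. In fact the obstacle you flag disappears. Write each lift as the code whose codewords have $i$-th coordinate $x_i a + L(x)_i b$ ($x\in D$) for a linear map $L\colon D\to\mathbb{F}_2^4$; then $\sigma\in\mathrm{Stab}(D)$ acts by $L\mapsto \sigma L\sigma^{-1}$ and fixes exactly the $\sigma$-equivariant $L$. Burnside then gives $(2^{12}+3\cdot 2^{7}+2\cdot 2^{4})/6=752$ orbits for each class, i.e.\ $1504$ in total, with no computer needed.
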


\begin{proof}
An exhaustive Magma computation shows that there are 1504 inequivalent ACD codes of this type.
\end{proof}

Next, we give the number of ACD codes when $n=5$ with cardinality two.
\begin{theorem}
For $n=5$, there are exactly 28 ACD codes over $I_{2}$ with cardinality two up to equivalence.
\end{theorem}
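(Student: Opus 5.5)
The plan is to reduce the count to binary $[5,1]$ LCD codes via \Cref{thm: ACD iff LCD I2-code} and then count lifts up to coordinate permutation, exactly as in the proofs for $n=3$ and $n=4$. An additive code $\mathcal{C}$ of cardinality two is $\{\mathbf{0},\mathbf{v}\}$ for some non-zero $\mathbf{v}\in I_2^5$; it is automatically additive since $2\mathbf{v}=\mathbf{0}$ in characteristic $2$. By \Cref{thm: ACD iff LCD I2-code}, $\mathcal{C}$ is ACD if and only if $\alpha(\mathcal{C})=\{\mathbf{0},\alpha(\mathbf{v})\}$ is a binary LCD code of cardinality two. This forces $\alpha(\mathbf{v})\neq\mathbf{0}$.

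First I will identify the binary $[5,1]$ LCD codes. A one-dimensional binary code $\langle \mathbf{u}\rangle$ meets its dual trivially if and only if $\langle\mathbf{u},\mathbf{u}\rangle\neq 0$, that is, if and only if $\operatorname{wt}(\mathbf{u})$ is odd. Up to permutation there are therefore exactly three such codes, generated by vectors of weight $w=1,3,5$. This agrees with the classification in \cite{Harada} and with the counts $2$ and $2$ used above for $n=3,4$.

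Next I will count lifts. For a fixed $\alpha(\mathbf{v})$ of weight $w$, the vector $\mathbf{v}$ has an entry from $\{a,c\}$ in each of the $w$ support positions and an entry from $\{0,b\}$ in each of the remaining $5-w$ positions. A coordinate permutation preserves $\operatorname{wt}(\alpha(\mathbf{v}))$, so lifts of binary codes of different weights are never equivalent. Within a fixed $w$, two such codes $\{\mathbf{0},\mathbf{v}\}$ and $\{\mathbf{0},\mathbf{v}'\}$ are permutation equivalent if and only if $\mathbf{v}$ and $\mathbf{v}'$ have the same number of entries $c$ and the same number of entries $b$. Here I will use that monomial and permutation equivalence coincide over $I_2$, and that the only non-zero element of $\mathcal{C}$ is $\mathbf{v}$.

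The equivalence classes are therefore indexed by the number of $c$'s, which lies in $\{0,\dots,w\}$, and the number of $b$'s, which lies in $\{0,\dots,5-w\}$. This gives $(w+1)(6-w)$ classes. Summing over $w=1,3,5$ gives $2\cdot 5+4\cdot 3+6\cdot 1=10+12+6=28$. As a sanity check, the same formula reproduces the counts $10$ and $16$ obtained above for $n=3$ and $n=4$.

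The main point requiring care is the equivalence step: I must justify that the pair (number of $c$'s, number of $b$'s) is a complete invariant. Completeness follows because permutations act transitively on arrangements with fixed multiplicities of $a,c,0,b$. Invariance follows because the code determines $\mathbf{v}$ uniquely.
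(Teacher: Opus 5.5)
Your proof is correct and follows the same route as the paper: you use \Cref{thm: ACD iff LCD I2-code} to reduce to the three binary $[5,1]$ LCD codes, generated by vectors of weight $1,3,5$, and then count lifts up to coordinate permutation to get $10+12+6=28$. Your complete invariant (the number of $c$'s and the number of $b$'s in $\mathbf{v}$), which yields $(w+1)(6-w)$ classes for each weight $w$, states explicitly the count that the paper carries out case by case.
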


\begin{proof}
According to \cite{Harada}, there are only three binary $[5,1]$ LCD codes up to equivalence. One is $\alpha\left(\mathcal{C}\right)=\left\{ 00000, 10000\right\} $, another is $\alpha\left(\mathcal{C}\right)=\left\{ 00000, 11100\right\} $,  and the other is $\alpha\left(\mathcal{C}\right)=\left\{ 00000, 11111\right\} $. By \Cref{thm: ACD iff LCD I2-code}, there are at most $2^{5}$ ACD codes over $I_{2}$ corresponding to the first one, and another $2^{5}$ ACD codes corresponding to the second one, and another $2^{5}$ ACD codes corresponding to the third one. Therefore there are at most 96 ACD codes over $I_{2}$ with cardinality two. Now, among the $2^{5}$ ACD codes corresponding to the first LCD code, there are 10 inequivalent ACD codes. We have two choices, $a$ or $c$, for the first symbol. For each choice for the first symbol, there are five options for the remaining four symbols that depend on how many $b$s they contain. Therefore, there are 10 inequivalent ACD codes. By similar reasoning, there are twelve and six inequivalent ACD codes corresponding to the second and third LCD code respectively. Therefore, we have a total of 28 inequivalent ACD codes with cardinality two.
\end{proof}

The classification of ACD codes over $I_{2}$ is summarized in Table 5. For each length and cardinality, the number of ACD codes is given in the third column. The highest Hamming and Lee minimum distances are given in the next two columns. The last column provides examples of ACD codes with the highest Lee and hence Hamming minimum distances. MDS over $I_{2}$ is marked with $\dagger$ and binary optimal LCD code is marked with $^*$. We note that for all length and cardinality, we have MDS over $I_{2}$, and in most cases we have binary optimal LCD code.

\begin{table}[h]
\begin{center}
\caption{Summary of classifications over $I_{2}$. Here $\dagger$ denotes MDS over $I_2$ and * denotes a binary optimal LCD code}\label{tab1}%
\resizebox{\textwidth}{!}{%
\begin{tabular}{p{0.3cm}>{\centering}p{2.0cm}>{\centering}p{2.0cm}>{\centering}p{2.0cm}>{\centering}p{2.0cm}>{\centering}p{4.0cm}}
\hline
$n$ & Cardinality  &  \#ACD codes & Highest $d_H$($d_L$)  & Binary LCD & Example \tabularnewline
\hline
1 & 2 & 2 & $1^{\dagger}$($1$) & $\left[1,1,1\right]^{*}$ & $\left\{ 0,a\right\} $ \tabularnewline

2 & 2 & 4 & $2^{\dagger}$(2) & $\left[2,1,1\right]^{*}$ & $\left\{ 00,ab\right\} $ \tabularnewline

2 & 4 & 10 & $2^{\dagger}$($2$) & $\left[2,2,1\right]^{*}$ & $\left\{ 00,cb,aa,bc\right\} $      \tabularnewline

3 & 2 & 10 & $3^{\dagger}$(3) & $\left[3,1,3\right]^{*}$ & $\left\{ 000,aaa\right\} $       \tabularnewline

3 & 4 & 52 & $3^{\dagger}$($3$) & $\left[3,2,2\right]^{*}$ & $\left\{ 000,aab,bcc,cba\right\} $                           \tabularnewline

3 & 8 & 104 & $2^{\dagger}$($2$) & $\left[3,3,1\right]^{*}$ & $\{ 000,a0a,bbc,cbb,bcb,$\\
                                                                            $aa0,ccc,0aa\}$   \tabularnewline
4 & 2 & 16 & $4^{\dagger}$(4) & $\left[4,1,1\right]$ & $\left\{ 0000,abbb\right\} $\tabularnewline

4 & 4 & 308 & $4^{\dagger}$(4)  & $\left[4,2,1\right]$ & $\left\{ 0000,abbb,ccca,baac\right\} $    \tabularnewline

4 & 8 & 1504 & $3^{\dagger}$($3$) & $\left[4,3,1\right]^{*}$ & $\{0000,bbcb,ba0b,ccba,$\\
                                                                            $0ccc,abbc,c0ab,aaa0\}$   \tabularnewline

5 & 2 & 28 & $5^{\dagger}$(5) & $\left[5,1,1\right]$ & $\left\{ 00000,abbbb\right\} $ \tabularnewline
\hline
\end{tabular}}

\end{center}
\end{table}

\section{ACD codes over $\boldsymbol{I_{3}}$}
\label{sec: ACD codes over I3}
The concept of nice codes was first introduced in \cite{Sole7}. They defined nice codes because for the ring of 4 elements they considered, it is not necessarily the case that $\left|\mathcal{C}\right|\left|\mathcal{C}^{\perp}\right|=4^{n}$. The following example shows that we also need the concept of nice codes for the ring $I_{3}$.
\begin{example}
Let $\mathcal{C}=I_{3}=\left\{c_{00}, c_{10}, c_{01}, c_{11}, c_{20}, c_{02}, c_{21}, c_{12}, c_{22} \right\}$. Then $\mathcal{C}^{\perp}=\left\{c_{00}, c_{01}, c_{02}\right\} $.
So, $\left|\mathcal{C}\right|\left|\mathcal{C}^{\perp}\right|=9\cdot3\neq9$.
\end{example}

Call a linear $I_{3}$-code $\mathcal{C}$ nice if $\left|\mathcal{C}\right|\left|\mathcal{C}^{\perp}\right|=9^{n}$.
Define a linear $I_{3}$-code $\mathcal{C}$ to be LCD if it is nice and $\mathcal{C}\cap \mathcal{C}^{\perp}=\left\{ \mathbf{0}\right\} $.
An additive $I_{3}$-code of length $n$ is an additive subgroup of $I_{3}^{n}$.
An additive $I_{3}$-code $\mathcal{C}$ is called nice if $\left|\mathcal{C}\right|\left|\mathcal{C}^{\perp}\right|=9^{n}$.

\begin{definition}
An additive $I_{3}$-code $\mathcal{C}$ is ACD if it is nice and $\mathcal{C}\cap \mathcal{C}^{\perp}=\left\{ \mathbf{0}\right\} $.
\end{definition}

We shall investigate LCD and ACD codes over $I_{3}$, just as we carried out the same kind of research over $I_{2}$. We start with a few simple examples to see if there are simple LCD codes over $I_{3}$.
\begin{example}
Let $J_{3}=\left\{ 0,b,2b\right\}=\left\{ c_{00},c_{01},c_{02}\right\}$. Then
\begin{equation*}
J^{\perp}=I_{3}=\left\{ c_{00}, c_{10}, c_{01}, c_{11}, c_{20}, c_{02}, c_{21}, c_{12}, c_{22}\right\},
\end{equation*}
$\left|J_{3}\right|\left|J_{3}^{\perp}\right|=3\cdot9\neq9^{1}$ and $J_{3}\cap J_{3}^{\perp}\neq\left\{ 0 \right\} $.
Hence, $J_{3}$ is not LCD.
\end{example}

\begin{example}
Let $R_{3}=\left\{ c_{00}c_{00}, c_{10}c_{10}, c_{01}c_{01}, c_{11}c_{11}, c_{20}c_{20}, c_{02}c_{02}, c_{21}c_{21}, c_{12}c_{12}, c_{22}c_{22}\right\} $. Then
\begin{align*}
R_{3}^{\perp}=&\left\{ c_{00}c_{00}, c_{01}c_{01}, c_{02}c_{02}, c_{00}c_{01}, c_{01}c_{00}, c_{00}c_{02}, c_{02}c_{00}, c_{01}c_{02}, c_{02}c_{01}\right\}.
\end{align*}
$\left|R_{3}\right|\left|R_{3}^{\perp}\right|=9\cdot9=9^{2}$ but $R_{3}\cap R_{3}^{\perp}\neq\left\{ \mathbf{0}\right\} $.
Hence, $R_{3}$ is not LCD.
\end{example}

The following theorem tells us that, just as in the case of $I_{2}$, it is not possible to find an LCD code over $I_{3}$.
\begin{theorem}
There is no LCD code over $I_{3}$.
\end{theorem}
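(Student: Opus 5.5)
The plan is to mirror the proof given for $I_{2}$. I will take an arbitrary linear $I_{3}$-code $\mathcal{C}$ of length $n$ and show that $\mathcal{C}\cap\mathcal{C}^{\perp}\neq\{\mathbf{0}\}$ whenever $\mathcal{C}\neq\{\mathbf{0}\}$. That already rules out LCD, regardless of niceness. For the zero code one has $\mathcal{C}^{\perp}=I_{3}^{n}$ and $|\mathcal{C}||\mathcal{C}^{\perp}|=9^{n}$, so it is trivially LCD. I therefore read the statement, as for $I_{2}$ and in line with \cite{Sole3}, as saying that no nontrivial LCD code exists, and I will remark on this at the start.

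The key structural fact is read off the multiplication table of $I_{3}$. For $x=ia+jb$ and $y=ka+lb$ we have $xy=(ik)b$, so every product lies in $J_{3}=\{0,b,2b\}$. Moreover $J_{3}$ annihilates everything, since $b\cdot y=0$ for all $y$. Consequently, for $\mathbf{x},\mathbf{y}\in I_{3}^{n}$ we get $(\mathbf{x},\mathbf{y})=\langle\beta(\mathbf{x}),\beta(\mathbf{y})\rangle b$. In particular, any vector $\mathbf{z}$ with all coordinates in $J_{3}$ satisfies $(\mathbf{z},\mathbf{y})=0$ for every $\mathbf{y}$, so $\mathbf{z}\in\mathcal{C}^{\perp}$.

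Now pick a nonzero $\mathbf{x}\in\mathcal{C}$ and split into two cases.
\begin{enumerate}
\item[(i)] If every coordinate of $\mathbf{x}$ lies in $J_{3}$, then $\mathbf{x}\in\mathcal{C}^{\perp}$ by the observation above, so $\mathbf{x}\in\mathcal{C}\cap\mathcal{C}^{\perp}$.
\item[(ii)] Otherwise some coordinate $x_{k}=ia+jb$ has $i\neq0$. Since $\mathcal{C}$ is an $I_{3}$-submodule, $a\mathbf{x}\in\mathcal{C}$. Each coordinate of $a\mathbf{x}$ lies in $J_{3}$, and the $k$-th one equals $ib\neq0$. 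So $a\mathbf{x}$ is a nonzero vector of $\mathcal{C}$ lying in $\mathcal{C}^{\perp}$.
\end{enumerate}
In both cases $\mathcal{C}\cap\mathcal{C}^{\perp}\neq\{\mathbf{0}\}$, so $\mathcal{C}$ is not LCD.

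There is no real obstacle here; the only care needed is verifying $a\cdot(ia+jb)=ib$ and that $J_{3}$ annihilates $I_{3}$, both immediate from the relations $a^{2}=b$ and $ab=0$ (or from the table). It is worth noting that the argument is exactly where linearity (closure under multiplication by $a$) is used. This is why additive codes can escape the obstruction and why we pass to ACD codes next.
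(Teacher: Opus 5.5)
Your proof is correct and follows the paper's argument: for a nonzero codeword $\mathbf{x}$, either all coordinates lie in $J_{3}$ and $\mathbf{x}\in\mathcal{C}\cap\mathcal{C}^{\perp}$, or $a\mathbf{x}$ is a nonzero vector in $\mathcal{C}\cap\mathcal{C}^{\perp}$. Your two additions are accurate refinements of points the paper passes over: the zero code is formally LCD under the paper's definitions, so the statement must be read as excluding it, and $a\mathbf{x}\neq\mathbf{0}$ because some coordinate becomes $ib\neq 0$.
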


\begin{proof}
Choose any non-zero codeword $\mathbf{x}$ in a linear code $\mathcal{C}$ over $I_{3}$ .
If all the components of $\mathbf{x}$ are in $J_{3}=\left\{ 0,b,2b\right\} $, then $\mathbf{x}$ is also in the dual of $\mathcal{C}$ since the elements of $J_{3}$ are annihilators in $I_{3}$.
Otherwise, all the components of $a\mathbf{x}$ in $\mathcal{C}$ consist of elements of $J_{3}$, hence $a\mathbf{x}$ is non-zero and is also in the dual of $\mathcal{C}$.  Both cases imply that $\mathcal{C} \cap \mathcal{C}^{\perp} \ne \left\{ \mathbf{0}\right\} $.
Therefore $\mathcal{C}$ is not LCD.
\end{proof}

The above theorem can be generalized to any ring $I_{p}$. It can be proved with a similar reasoning.
\begin{theorem}
There is no LCD code over $I_{p}$.
\end{theorem}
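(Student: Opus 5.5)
I will follow the same two-case argument used for $I_{2}$ and $I_{3}$, now using the structure of $I_{p}$. From the defining relations $a^{2}=b$, $ab=0$, $b^{2}=0$ and $pa=pb=0$, the product of $x=ia+jb$ and $y=ka+lb$ in $I_{p}$ is $xy=ik\,b$. Hence every product lies in $J_{p}=\{jb : 0\le j<p\}$. Moreover, every element of $J_{p}$ is an annihilator: $b\cdot y=k\,ab+l\,b^{2}=0$ for all $y\in I_{p}$. In particular $J_{p}\cdot I_{p}=0$.

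Now let $\mathcal{C}$ be any linear $I_{p}$-code and pick a non-zero codeword $\mathbf{x}\in\mathcal{C}$. I aim to produce a non-zero vector in $\mathcal{C}\cap\mathcal{C}^{\perp}$, which shows $\mathcal{C}$ is not LCD regardless of niceness.

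\emph{Case 1: every coordinate of $\mathbf{x}$ lies in $J_{p}$.} For every $\mathbf{y}\in\mathcal{C}$, each term of $(\mathbf{x},\mathbf{y})=\sum_{t}x_{t}y_{t}$ vanishes because $x_{t}$ is an annihilator. So $\mathbf{x}\in\mathcal{C}^{\perp}$, and $\mathbf{x}$ is itself a non-zero element of $\mathcal{C}\cap\mathcal{C}^{\perp}$.

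\emph{Case 2: some coordinate $x_{t}=ia+jb$ has $i\neq 0$ in $\mathbb{F}_{p}$.} Consider $a\mathbf{x}$, which lies in $\mathcal{C}$ because $\mathcal{C}$ is an $I_{p}$-submodule. Its $t$-th coordinate is $ax_{t}=ib\neq 0$, since $i\not\equiv 0 \pmod p$. So $a\mathbf{x}\neq\mathbf{0}$. All coordinates of $a\mathbf{x}$ lie in $J_{p}$, so by the Case 1 reasoning $a\mathbf{x}\in\mathcal{C}^{\perp}$. Thus $a\mathbf{x}$ is a non-zero element of $\mathcal{C}\cap\mathcal{C}^{\perp}$.

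In both cases $\mathcal{C}\cap\mathcal{C}^{\perp}\neq\{\mathbf{0}\}$, so no non-zero linear code over $I_{p}$ is LCD.

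The trivial code $\{\mathbf{0}\}$ is not covered by this argument, since it has no non-zero codeword. Its dual is all of $I_{p}^{n}$, so $|\mathcal{C}||\mathcal{C}^{\perp}|=p^{2n}$ and it is formally LCD. I will therefore state the result for non-zero codes, matching the "nontrivial" reading used for $I_{2}$ in \cite{Sole3}; the earlier theorems for $I_{2}$ and $I_{3}$ implicitly make the same restriction.

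The only step needing care is checking that $a\mathbf{x}$ is genuinely non-zero. This uses that $p$ is prime, so $ib\neq 0$ whenever $i\not\equiv 0$, together with the fact that $\mathcal{C}$ is closed under multiplication by ring elements. That closure is exactly why the argument applies to linear codes and fails for merely additive ones.
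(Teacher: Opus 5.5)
Your proof is correct and is the paper's argument: the paper proves the $I_{3}$ case with the same two cases (a codeword with all coordinates in $J_{3}$ is itself in the dual, and otherwise $a\mathbf{x}$ is a non-zero codeword lying in the dual), and says the $I_{p}$ case follows by similar reasoning. Your remark that $\{\mathbf{0}\}$ is formally LCD, so the theorem should be read for non-zero codes, is a valid refinement that the paper leaves implicit when it picks a non-zero codeword.
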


Because there is no LCD code over $I_{3}$, we naturally turn our attention to ACD codes.
We show the following relation, which is an extension of Theorem 3.5 on ring $I_{2}$ to ring $I_{3}$, between ACD codes over $I_{3}$ and ternary LCD codes.
\begin{theorem} \label{thm: ACD iff LCD I3-code}
An additive code $\mathcal{C}\subseteq I_{3}^{n}$ is ACD if and only if $\beta\left(\mathcal{C}\right)\subseteq\mathbb{F}_{3}^{n}$ is LCD {\color{blue}and $\left|\beta\left(\mathcal{C}\right)\right|=\left|\mathcal{C}\right|$}.
\end{theorem}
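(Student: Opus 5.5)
The plan is to mirror the proof of \Cref{thm: ACD iff LCD I2-code}, replacing $\alpha$ by $\beta$ and $\{0,b\}$ by $J_{3}=\{0,b,2b\}$. Everything rests on one identity. For $x=ia+jb$ and $y=i'a+j'b$ in $I_{3}$ we have $xy=ii'b=\beta(x)\beta(y)b$. Hence for $\mathbf{x},\mathbf{y}\in I_{3}^{n}$,
\[
(\mathbf{x},\mathbf{y})=\langle\beta(\mathbf{x}),\beta(\mathbf{y})\rangle\, b .
\]
In particular, $(\mathbf{x},\mathbf{y})=0$ if and only if $\langle\beta(\mathbf{x}),\beta(\mathbf{y})\rangle=0$ in $\mathbb{F}_{3}$. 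Two consequences follow at once: $\mathcal{C}^{\perp}=\beta^{-1}\bigl((\beta(\mathcal{C}))^{\perp}\bigr)$, and $\ker\beta=J_{3}^{n}\subseteq\mathcal{C}^{\perp}$ for every $\mathcal{C}$. Also, $\beta\colon I_{3}^{n}\to\mathbb{F}_{3}^{n}$ is a surjective group homomorphism whose fibres have size $3^{n}$. Therefore
\[
|\mathcal{C}^{\perp}|=3^{n}\,|(\beta(\mathcal{C}))^{\perp}|,
\]
and since $\beta(\mathcal{C})$ is additive, hence $\mathbb{F}_{3}$-linear, $|(\beta(\mathcal{C}))^{\perp}|=3^{n}/|\beta(\mathcal{C})|$.

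For the forward direction, assume $\mathcal{C}$ is ACD. First I show $|\beta(\mathcal{C})|=|\mathcal{C}|$. If $\mathbf{c}\ne\mathbf{c}'$ in $\mathcal{C}$ had the same image, then $\mathbf{c}-\mathbf{c}'$ would be a nonzero element of $\mathcal{C}\cap J_{3}^{n}$. That set is contained in $\mathcal{C}\cap\mathcal{C}^{\perp}=\{\mathbf{0}\}$, a contradiction. Next, $\beta(\mathcal{C})$ is LCD over $\mathbb{F}_{3}$. Suppose $\mathbf{v}=\beta(\mathbf{c})\in\beta(\mathcal{C})\cap(\beta(\mathcal{C}))^{\perp}$. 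By the identity, $(\mathbf{c},\mathbf{c}_{1})=0$ for all $\mathbf{c}_{1}\in\mathcal{C}$, so $\mathbf{c}\in\mathcal{C}\cap\mathcal{C}^{\perp}$. Thus $\mathbf{c}=\mathbf{0}$ and $\mathbf{v}=\mathbf{0}$. For ternary linear codes the size condition $|\beta(\mathcal{C})||(\beta(\mathcal{C}))^{\perp}|=3^{n}$ is automatic.

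For the converse, assume $\beta(\mathcal{C})$ is LCD and $|\beta(\mathcal{C})|=|\mathcal{C}|$. Then $\beta$ restricted to $\mathcal{C}$ is injective. Take $\mathbf{c}\in\mathcal{C}\cap\mathcal{C}^{\perp}$. By the identity, $\beta(\mathbf{c})\in\beta(\mathcal{C})\cap(\beta(\mathcal{C}))^{\perp}=\{\mathbf{0}\}$, and injectivity gives $\mathbf{c}=\mathbf{0}$. For niceness, write $|\mathcal{C}|=|\beta(\mathcal{C})|=3^{k}$. Then
\[
|\mathcal{C}^{\perp}|=3^{n}\cdot 3^{n-k},
\qquad\text{so}\qquad
|\mathcal{C}||\mathcal{C}^{\perp}|=9^{n}.
\]

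I expect the main obstacle to be bookkeeping rather than mathematics. The argument hinges on the identity $(\mathbf{x},\mathbf{y})=\langle\beta(\mathbf{x}),\beta(\mathbf{y})\rangle b$ together with the injectivity of $b\mapsto$ scalar multiples, since $tb=0$ iff $t=0$ in $\mathbb{F}_{3}$. This works for $p=3$, and indeed for any $p$, in place of the parity arguments used for $I_{2}$. Once that is stated, the fibre count $|\mathcal{C}^{\perp}|=3^{n}|(\beta(\mathcal{C}))^{\perp}|$ and the kernel argument are routine. The same proof then gives the $I_{p}$ generalization verbatim.
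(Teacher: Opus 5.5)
Your proof is correct and follows essentially the same route as the paper. It uses the same three ingredients: the kernel argument ($\mathcal{C}\cap J_{3}^{n}\subseteq\mathcal{C}\cap\mathcal{C}^{\perp}$) to get $|\beta(\mathcal{C})|=|\mathcal{C}|$, orthogonality transferred through $(\mathbf{x},\mathbf{y})=\langle\beta(\mathbf{x}),\beta(\mathbf{y})\rangle b$ (which the paper uses only implicitly and you state explicitly), and the fibre count $|\mathcal{C}^{\perp}|=3^{n}|(\beta(\mathcal{C}))^{\perp}|$ for niceness; you also rightly note that the forward direction needs only $\mathcal{C}\cap\mathcal{C}^{\perp}=\{\mathbf{0}\}$, since the size condition over $\mathbb{F}_{3}$ is automatic.
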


{\color{blue}
\begin{proof}
Assume that $\mathcal{C}\subseteq I_{3}^{n}$ is ACD, i.e., $\mathcal{C}\cap\mathcal{C}^{\perp}=\left\{ \mathbf{0}\right\} $ and
$\left|\mathcal{C}\right|\left|\mathcal{C}^{\perp}\right|=9^{n}$. We will show that $\beta\left(\mathcal{C}\right)\cap\left(\beta\left(\mathcal{C}\right)\right)^{\perp}=\left\{ \mathbf{0}\right\}$, $\left|\beta\left(\mathcal{C}\right)\right|\left|(\beta\left(\mathcal{C}\right))^{\perp}\right|=3^{n}$ and $\left|\beta\left(\mathcal{C}\right)\right|=\left|\mathcal{C}\right|$. Since we assume $\mathcal{C}\cap\mathcal{C}^{\perp}=\left\{ \mathbf{0}\right\}$, no non-zero element of $\mathcal{C}$ is in $\mathcal{C}^{\perp}$.
Suppose $\mathbf{c}_{1}=c_{i_{1}j_{1}}\in\mathcal{C}$ is non-zero. We may assume that $\beta\left(\mathbf{c}_{1}\right)$ is non-zero. If $\beta\left(\mathbf{c}_{1}\right)$ is zero, there is nothing to prove. We shall show that $\beta\left(\mathbf{c}_{1}\right)\notin\beta\left(\mathcal{C}\right)\cap\left(\beta\left(\mathcal{C}\right)\right)^{\perp}$. Since $\beta\left(\mathbf{c}_{1}\right)\in\beta\left(\mathcal{C}\right)$, we only need to show that $\beta\left(\mathbf{c}_{1}\right)\notin\beta\left(\mathcal{C}\right)^{\perp}$.
Since $\mathbf{c}_{1}=c_{i_{1}j_{1}}$ is non-zero and $\mathbf{c}_{1}\notin\mathcal{C}^{\perp}$, there is a non-zero $\mathbf{c}_{2}=c_{i_{2}j_{2}}\in\mathcal{C}$ such that $( \mathbf{c}_{1} , \mathbf{c}_{2} ) \neq0$.
Since $( \mathbf{c}_{1} , \mathbf{c}_{2} ) \neq0$, we see that both $i_{1}$ and $i_{2}$ are non-zero. Then it follows that $\beta\left(\mathbf{c}_{2}\right)$ is non-zero and $\langle \beta\left(\mathbf{c}_{1}\right) , \beta\left(\mathbf{c}_{2}\right) \rangle \neq0$. Thus, we have found a codeword $\beta\left(\mathbf{c}_{2}\right)\in\beta\left(\mathcal{C}\right)$ such that $\langle \beta\left(\mathbf{c}_{1}\right) , \beta\left(\mathbf{c}_{2}\right) \rangle \neq0$. Therefore, $\beta\left(\mathbf{c}_{1}\right)\notin\left(\beta\left(\mathcal{C}\right)\right)^{\perp}$. Hence, $\beta\left(\mathcal{C}\right)\cap\left(\beta\left(\mathcal{C}\right)\right)^{\perp}=\left\{ \mathbf{0}\right\}$.
 Since the additive group of $I_{3}$ is isomorphic to $(\mathbb{F}_{3})^{2}$, every additive subgroup of $I_{3}^{n}$ is naturally an $\mathbb{F}_{3}$-vector space. Hence, $|\mathcal{C}|=3^{k}$ for some integer $k$. Since $|\mathcal{C}||\mathcal{C}^{\perp}|=9^{n} = 3^{2n}$, it follows that 
$|\mathcal{C}^{\perp}|
=
3^{2n-k}$.
Because 0 corresponds to 0 or $b$ or $2b$ and 1 to $a$ or $a+b$ or $a+2b$ and 2 to $2a$ or $2a+b$ or $2a+2b$, each element of $\beta\left(\mathcal{C}\right)^{\perp}$ corresponds to $3^{n}$ elements of $\mathcal{C}^{\perp}$.
Thus, $\left|(\beta\left(\mathcal{C}\right))^{\perp}\right|=3^{n-k}$.
We note that for $\mathcal{C}$ to be an ACD code, $\left|\mathcal{C}\right|$ has to be equal to $\left|\beta\left(\mathcal{C}\right)\right|$.
We will show that $\left|\mathcal{C}\right|=\left|\beta\left(\mathcal{C}\right)\right|$. We note that since $\beta$ is a reduction map, $\left|\mathcal{C}\right|\geq\left|\beta\left(\mathcal{C}\right)\right|$. Suppose, for a contradiction, that $\left|\mathcal{C}\right|>\left|\beta\left(\mathcal{C}\right)\right|$. Then for a certain $\beta\left(\mathbf{c}_{1}\right)\in\beta\left(\mathcal{C}\right)$, there exist at least two distinct codewords $\mathbf{c}_{1},\mathbf{c}_{2}\in\mathcal{C}$ such that $\beta\left(\mathbf{c}_{1}\right)=\beta\left(\mathbf{c}_{2}\right)$. Thus, $2\beta\left(\mathbf{c}_{1}\right)+\beta\left(\mathbf{c}_{2}\right)=0$. Then $2\mathbf{c}_{1}+\mathbf{c}_{2}$ is non-zero and  all of its coordinates belong to $\{ 0 , b , 2b \}$. This non-zero $2\mathbf{c}_{1}+\mathbf{c}_{2}$ is then in both $\mathcal{C}$ and $\mathcal{C}^{\perp}$, which is a contradiction. Therefore, $\left|\mathcal{C}\right|=\left|\beta\left(\mathcal{C}\right)\right|$. Thus, $\left|\beta\left(\mathcal{C}\right)\right|\left|(\beta\left(\mathcal{C}\right))^{\perp}\right|=3^{n}$.

For the other direction, assume that $\beta\left(\mathcal{C}\right)\cap\left(\beta\left(\mathcal{C}\right)\right)^{\perp}=\left\{ \mathbf{0}\right\}$,
$\left|\beta\left(\mathcal{C}\right)\right|\left|(\beta\left(\mathcal{C}\right))^{\perp}\right|=3^{n}$ and $\left|\beta\left(\mathcal{C}\right)\right|=\left|\mathcal{C}\right|$.
We will show that $\mathcal{C}\cap\mathcal{C}^{\perp}=\left\{ \mathbf{0}\right\} $ and $\left|\mathcal{C}\right|\left|\mathcal{C}^{\perp}\right|=9^{n}$.
Since $\beta\left(\mathcal{C}\right)\cap\left(\beta\left(\mathcal{C}\right)\right)^{\perp}=\left\{ \mathbf{0}\right\}$, no non-zero element of $\beta\left(\mathcal{C}\right)$ is in
$\left(\beta\left(\mathcal{C}\right)\right)^{\perp}$. Also, since $\left|\beta\left(\mathcal{C}\right)\right|=\left|\mathcal{C}\right|$, there is only one non-zero element of $\mathcal{C}$ that corresponds to each non-zero element of $\beta\left(\mathcal{C}\right)$.
Suppose $\beta\left(\mathbf{c}_{1}\right)\in\beta\left(\mathcal{C}\right)$ is non-zero. We shall show that $\mathbf{c}_{1}\notin\mathcal{C}\cap\mathcal{C}^{\perp}$.
Since $\mathbf{c}_{1}\in\mathcal{C}$, we only need to show that $\mathbf{c}_{1}\notin\mathcal{C}^{\perp}$.
Since $\beta\left(\mathbf{c}_{1}\right)$ is non-zero and $\beta\left(\mathbf{c}_{1}\right)\notin\beta\left(\mathcal{C}\right)^{\perp}$, then there exists $\mathbf{c}_{2}\in\mathcal{C}$ such that $\langle \beta\left(\mathbf{c}_{1}\right) , \beta\left(\mathbf{c}_{2}\right) \rangle\neq0$.
Then it follows that $(\mathbf{c}_{1} , \mathbf{c}_{2}) = b$ or $2b\neq0$.
Thus, we have found a codeword $\mathbf{c}_{2}\in\mathcal{C}$ such that $(\mathbf{c}_{1} , \mathbf{c}_{2})\neq0$.
Therefore, $\mathbf{c}_{1}\notin\mathcal{C}^{\perp}$.
Thus, $\mathcal{C}\cap\mathcal{C}^{\perp}=\left\{ \mathbf{0}\right\}$.
Now, we have
$\left|\beta\left(\mathcal{C}\right)\right|
=\left|\mathcal{C}\right|
=3^{k}$ for some $k$. Thus,
$\left|(\beta\left(\mathcal{C}\right))^{\perp}\right|=3^{n-k}$.
As observed earlier, because 0 corresponds to 0 or $b$ or $2b$ and 1 to $a$ or $a+b$ or $a+2b$ and 2 to $2a$ or $2a+b$ or $2a+2b$, each element of $\beta\left(\mathcal{C}\right)^{\perp}$ corresponds to $3^{n}$ elements of $\mathcal{C}^{\perp}$. Thus, $\left|\mathcal{C}^{\perp}\right|=3^{2n-k}$. Therefore $\left|\mathcal{C}\right|\left|\mathcal{C}^{\perp}\right|=9^{n}$.
\end{proof}
}

The general case of the above theorem to any ring $I_{p}$ for $p$ prime bigger than 3 is given below.

\begin{theorem}  \label{thm: ACD iff LCD Ip-code}
An additive code $\mathcal{C}\subseteq I_{p}^{n}$ is ACD if and only if $\beta_{p}\left(\mathcal{C}\right)\subseteq\mathbb{F}_{p}^{n}$ is LCD {\color{blue}and $\left|\beta_{p}\left(\mathcal{C}\right)\right|=\left|\mathcal{C}\right|$.}
\end{theorem}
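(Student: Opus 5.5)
The plan follows the proof of \Cref{thm: ACD iff LCD I3-code}, with $3$ replaced by $p$, resting on two structural facts about $I_p$. First, for $x=i_1a+j_1b$ and $y=i_2a+j_2b$ we have $xy=i_1i_2\,b$, so for $\mathbf{x},\mathbf{y}\in I_p^n$
\[(\mathbf{x},\mathbf{y})=\langle\beta_p(\mathbf{x}),\beta_p(\mathbf{y})\rangle\, b .\]
Since $b$ has additive order $p$, this gives $(\mathbf{x},\mathbf{y})=0$ if and only if $\langle\beta_p(\mathbf{x}),\beta_p(\mathbf{y})\rangle=0$ in $\mathbb{F}_p$. Second, the additive group of $I_p$ is $\mathbb{F}_p^2$, so every additive code $\mathcal{C}$ is an $\mathbb{F}_p$-subspace with $|\mathcal{C}|=p^k$. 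The map $\beta_p\colon\mathcal{C}\to\beta_p(\mathcal{C})$ is $\mathbb{F}_p$-linear, and its kernel is $\mathcal{C}\cap J_p^n$.

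From the inner product formula, $\mathbf{y}\in\mathcal{C}^\perp$ if and only if $\beta_p(\mathbf{y})\in\beta_p(\mathcal{C})^\perp$. Hence $\mathcal{C}^\perp=\beta_p^{-1}(\beta_p(\mathcal{C})^\perp)$. Each fibre of $\beta_p\colon I_p^n\to\mathbb{F}_p^n$ has size $p^n$, so
\[|\mathcal{C}^\perp|=p^n\,|\beta_p(\mathcal{C})^\perp|=p^{2n-\dim\beta_p(\mathcal{C})}.\]
Since $\beta_p(\mathcal{C})$ is linear, it is nice automatically. Consequently $\mathcal{C}$ is nice if and only if $|\mathcal{C}|=|\beta_p(\mathcal{C})|$. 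This disposes of the cardinality conditions in both directions at once.

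Forward direction. Assume $\mathcal{C}$ is ACD. Any nonzero $\mathbf{c}\in\mathcal{C}\cap J_p^n$ annihilates everything, so it lies in $\mathcal{C}\cap\mathcal{C}^\perp$. Hence the kernel is trivial and $|\mathcal{C}|=|\beta_p(\mathcal{C})|$; this is the $2\mathbf{c}_1+\mathbf{c}_2$ argument of the $I_3$ case, now with $\mathbf{c}_1-\mathbf{c}_2$. Now take a nonzero $\mathbf{u}\in\beta_p(\mathcal{C})\cap\beta_p(\mathcal{C})^\perp$ and lift it to $\mathbf{c}\in\mathcal{C}$. Then $\mathbf{c}\ne\mathbf{0}$, and by the formula $\mathbf{c}\in\mathcal{C}^\perp$, a contradiction.

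Converse. Assume $\beta_p(\mathcal{C})$ is LCD and $|\beta_p(\mathcal{C})|=|\mathcal{C}|$. Then $\beta_p$ is injective on $\mathcal{C}$. If $\mathbf{c}\in\mathcal{C}\cap\mathcal{C}^\perp$, the formula gives $\beta_p(\mathbf{c})\in\beta_p(\mathcal{C})\cap\beta_p(\mathcal{C})^\perp=\{\mathbf{0}\}$. Injectivity then forces $\mathbf{c}=\mathbf{0}$, and niceness follows from the count above.

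The only step needing care is the identity $(\mathbf{x},\mathbf{y})=\langle\beta_p(\mathbf{x}),\beta_p(\mathbf{y})\rangle b$, together with the observation that $\lambda b=0$ exactly when $p\mid\lambda$. This replaces the paper's case analysis on odd and even numbers of non-annihilators, which does not generalize verbatim. With the identity in hand, the rest is linear algebra over $\mathbb{F}_p$.
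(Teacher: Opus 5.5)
Your proof is correct and follows essentially the same route as the paper. Both arguments rest on the reduction of $(\mathbf{x},\mathbf{y})$ to $\langle\beta_p(\mathbf{x}),\beta_p(\mathbf{y})\rangle$, the count of $p^n$ preimages per element of $\beta_p(\mathcal{C})^{\perp}$, the fact that a nonzero codeword in $\mathcal{C}\cap J_p^n$ lies in $\mathcal{C}\cap\mathcal{C}^{\perp}$, and lifting codewords for the intersection condition. Your explicit identity $\mathcal{C}^{\perp}=\beta_p^{-1}(\beta_p(\mathcal{C})^{\perp})$ makes rigorous what the paper uses informally, and it shows directly that niceness of $\mathcal{C}$ is equivalent to $|\mathcal{C}|=|\beta_p(\mathcal{C})|$, so your kernel argument in the forward direction is redundant (though harmless).
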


{\color{blue}
\begin{proof}
Assume that $\mathcal{C}\subseteq I_{p}^{n}$ is ACD, i.e., $\mathcal{C}\cap\mathcal{C}^{\perp}=\left\{ \mathbf{0}\right\} $ and  
$\left|\mathcal{C}\right|\left|\mathcal{C}^{\perp}\right|=p^{2n}$. We will show that $\beta_{p}\left(\mathcal{C}\right)\cap\left(\beta_{p}\left(\mathcal{C}\right)\right)^{\perp}=\left\{ \mathbf{0}\right\}$, $\left|\beta_{p}\left(\mathcal{C}\right)\right|\left|(\beta_{p}\left(\mathcal{C}\right))^{\perp}\right|=p^{n}$ and $\left|\beta_{p}\left(\mathcal{C}\right)\right|=\left|\mathcal{C}\right|$. Since we assume $\mathcal{C}\cap\mathcal{C}^{\perp}=\left\{ \mathbf{0}\right\}$, no non-zero element of $\mathcal{C}$ is in $\mathcal{C}^{\perp}$. 
Suppose $\mathbf{c}_{1}=\mathbf{c}_{i_{1}j_{1}}\in\mathcal{C}$ is non-zero. We may assume that $\beta_{p}\left(\mathbf{c}_{1}\right)$ is non-zero. If $\beta_{p}\left(\mathbf{c}_{1}\right)$ is zero, there is nothing to prove. We shall show that $\beta_{p}\left(\mathbf{c}_{1}\right)\notin\beta_{p}\left(\mathcal{C}\right)\cap\left(\beta_{p}\left(\mathcal{C}\right)\right)^{\perp}$. Since $\beta_{p}\left(\mathbf{c}_{1}\right)\in\beta_{p}\left(\mathcal{C}\right)$, we only need to show that $\beta_{p}\left(\mathbf{c}_{1}\right)\notin\beta_{p}\left(\mathcal{C}\right)^{\perp}$. 
Since $\mathbf{c}_{1}=\mathbf{c}_{i_{1}j_{1}}$ is non-zero and $\mathbf{c}_{1}\notin\mathcal{C}^{\perp}$, there is a non-zero $\mathbf{c}_{2}=\mathbf{c}_{i_{2}j_{2}}\in\mathcal{C}$ such that $\left(\mathbf{c}_{1},\mathbf{c}_{2}\right) \neq0$.
Since $\left(\mathbf{c}_{1},\mathbf{c}_{2}\right) \neq0$, we see that both $i_{1}$ and $i_{2}$ are non-zero. Then it follows that $\beta_{p}\left(\mathbf{c}_{2}\right)$ is non-zero and 
$\left\langle\beta_{p}\left(\mathbf{c}_{1}\right),\beta_{p}\left(\mathbf{c}_{2}\right)\right\rangle \neq0$. Thus, we have found a codeword $\beta_{p}\left(\mathbf{c}_{2}\right)\in\beta_{p}\left(\mathcal{C}\right)$ such that 
$\left\langle\beta_{p}\left(\mathbf{c}_{1}\right),\beta_{p}\left(\mathbf{c}_{2}\right)\right\rangle \neq0$. Therefore, $\beta_{p}\left(\mathbf{c}_{1}\right)\notin\left(\beta_{p}\left(\mathcal{C}\right)\right)^{\perp}$. 
Hence $\beta_{p}\left(\mathcal{C}\right)\cap\left(\beta_{p}\left(\mathcal{C}\right)\right)^{\perp}=\left\{ \mathbf{0}\right\}$.
 Since the additive group of $I_{p}$ is isomorphic to $(\mathbb{F}_{p})^{2}$, every additive subgroup of $I_{p}^{n}$ is naturally an $\mathbb{F}_{p}$-vector space. Hence, $|\mathcal{C}|=p^{k}$ for some integer $k$. Since
$|\mathcal{C}||\mathcal{C}^{\perp}|=p^{2n}$, it follows that
$
|\mathcal{C}^{\perp}|
=
p^{2n-k}$.
 Because 0 corresponds to 0, $b$, $2b$, $\dotsc$, $(p-1)b$, and 1 to $a$, $a+b$, $a+2b$, $\dotsc$, $a+(p-1)b$, and more generally, $k$ to $ka$, $ka+b$, $ka+2b$, $\dotsc$, $ka+(p-1)b$, each element of $\beta_{p}\left(\mathcal{C}\right)^{\perp}$ corresponds to $p^{n}$ elements of $\mathcal{C}^{\perp}$. Thus, $\left|(\beta_{p}\left(\mathcal{C}\right))^{\perp}\right|=p^{n-k}$.
We note that for $\mathcal{C}$ to be an ACD code, $\left|\mathcal{C}\right|$ has to be equal to $\left|\beta_{p}\left(\mathcal{C}\right)\right|$. 
We will show that $\left|\mathcal{C}\right|=\left|\beta_{p}\left(\mathcal{C}\right)\right|$. We note that since $\beta_{p}$ is a reduction map, $\left|\mathcal{C}\right|\geq\left|\beta_{p}\left(\mathcal{C}\right)\right|$. Suppose, for a contradiction, that $\left|\mathcal{C}\right|>\left|\beta_{p}\left(\mathcal{C}\right)\right|$. Then for a certain $\beta_{p}\left(\mathbf{c}_{1}\right)\in\beta_{p}\left(\mathcal{C}\right)$, there exist at least two distinct codewords $\mathbf{c}_{1},\mathbf{c}_{2}\in\mathcal{C}$ such that $\beta_{p}\left(\mathbf{c}_{1}\right)=\beta_{p}\left(\mathbf{c}_{2}\right)$. Thus, $(p-1)\beta_{p}\left(\mathbf{c}_{1}\right)+\beta_{p}\left(\mathbf{c}_{2}\right)=0$. Then $(p-1)\mathbf{c}_{1}+\mathbf{c}_{2}$ is non-zero and  all of its coordinates belong to $\{ 0 , b , 2b, \dotsc , (p-1)b \}$. This non-zero $(p-1)\mathbf{c}_{1}+\mathbf{c}_{2}$ is then in both
$\mathcal{C}$ and $\mathcal{C}^{\perp}$, which is a contradiction. Therefore, $\left|\mathcal{C}\right|=\left|\beta_{p}\left(\mathcal{C}\right)\right|$. Thus, $\left|\beta_{p}\left(\mathcal{C}\right)\right|\left|(\beta_{p}\left(\mathcal{C}\right))^{\perp}\right|=p^{n}$.

For the other direction, assume that $\beta_{p}\left(\mathcal{C}\right)\cap\left(\beta_{p}\left(\mathcal{C}\right)\right)^{\perp}=\left\{ \mathbf{0}\right\}$, 
$\left|\beta_{p}\left(\mathcal{C}\right)\right|\left|(\beta_{p}\left(\mathcal{C}\right))^{\perp}\right|=p^{n}$ and $\left|\beta_{p}\left(\mathcal{C}\right)\right|=\left|\mathcal{C}\right|$.
We will show that $\mathcal{C}\cap\mathcal{C}^{\perp}=\left\{ \mathbf{0}\right\} $ and $\left|\mathcal{C}\right|\left|\mathcal{C}^{\perp}\right|=p^{2n}$.
Since $\beta_{p}\left(\mathcal{C}\right)\cap\left(\beta_{p}\left(\mathcal{C}\right)\right)^{\perp}=\left\{ \mathbf{0}\right\}$, no non-zero element of $\beta_{p}\left(\mathcal{C}\right)$ is in 
$\left(\beta_{p}\left(\mathcal{C}\right)\right)^{\perp}$. Also, since $\left|\beta_{p}\left(\mathcal{C}\right)\right|=\left|\mathcal{C}\right|$, there is only one non-zero element of $\mathcal{C}$ that corresponds to each non-zero element of $\beta_{p}\left(\mathcal{C}\right)$.
Suppose $\beta_{p}\left(\mathbf{c}_{1}\right)\in\beta_{p}\left(\mathcal{C}\right)$ is non-zero. We shall show that $\mathbf{c}_{1}\notin\mathcal{C}\cap\mathcal{C}^{\perp}$. 
Since $\mathbf{c}_{1}\in\mathcal{C}$, we only need to show that $\mathbf{c}_{1}\notin\mathcal{C}^{\perp}$. 
Since $\beta_{p}\left(\mathbf{c}_{1}\right)$ is non-zero and $\beta_{p}\left(\mathbf{c}_{1}\right)\notin\beta_{p}\left(\mathcal{C}\right)^{\perp}$, then there exists $\mathbf{c}_{2}\in\mathcal{C}$ such that 
$\left\langle\beta_{p}\left(\mathbf{c}_{1}\right),\beta_{p}\left(\mathbf{c}_{2}\right)\right\rangle\neq0$. 
Then it follows that $\left(\mathbf{c}_{1},\mathbf{c}_{2}\right)=b$, $2b$, ..., or $(p-1)b\neq0$. 
Thus, we have found a codeword $\mathbf{c}_{2}\in\mathcal{C}$ such that $\left(\mathbf{c}_{1},\mathbf{c}_{2}\right)\neq0$. 
Therefore, $\mathbf{c}_{1}\notin\mathcal{C}^{\perp}$.
Thus, $\mathcal{C}\cap\mathcal{C}^{\perp}=\left\{ \mathbf{0}\right\} $. 
Now, we have $\left|\beta_{p}\left(\mathcal{C}\right)\right|
=\left|\mathcal{C}\right|=p^{k}$ for some $k$. Thus, $\left|(\beta_{p}\left(\mathcal{C}\right))^{\perp}\right|=p^{n-k}$.
As observed earlier, 
because 0 corresponds to 0, $b$, $2b$, $\cdots$, $(p-1)b$, and 1 to $a$, $a+b$, $a+2b$, $\cdots$, $a+(p-1)b$, and more generally, $k$ to $ka$, $ka+b$, $ka+2b$, $\cdots$, $ka+(p-1)b$, each element of $\beta_{p}\left(\mathcal{C}\right)^{\perp}$ corresponds to $p^{n}$ elements of $\mathcal{C}^{\perp}$. Thus, $\left| \mathcal{C}^{\perp} \right|=p^{2n-k}$. Therefore $\left|\mathcal{C}\right|\left|\mathcal{C}^{\perp}\right|=p^{2n}$.
\end{proof}
}

The next theorem classifies ACD codes for $n=1$ using \Cref{thm: ACD iff LCD I3-code} together with the classification of ternary LCD codes in \cite{Harada}.
Note that throughout this section, we only deal with codes $\mathcal{C}$ that satisfy the condition $\left|\beta\left(\mathcal{C}\right)\right|=\left|\mathcal{C}\right|$.
\begin{theorem}
For $n=1$, there are only four ACD codes over $I_{3}$.
\end{theorem}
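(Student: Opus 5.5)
We will follow the same pattern as the length-one classification over $I_{2}$ and reduce the problem to ternary LCD codes via \Cref{thm: ACD iff LCD I3-code}. An additive code $\mathcal{C}\subseteq I_{3}$ is an $\mathbb{F}_{3}$-subspace of $I_{3}\cong\mathbb{F}_{3}^{2}$. Its image $\beta(\mathcal{C})$ is then a ternary linear code of length $1$, hence either $\{0\}$ or $\mathbb{F}_{3}$.

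\emph{Listing the ternary LCD codes.} Both candidates for $\beta(\mathcal{C})$ are LCD. For $\{0\}$ this is trivial. For $\mathbb{F}_{3}$ we have $\mathbb{F}_{3}^{\perp}=\{0\}$, and $3\cdot 1=3^{1}$.

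\emph{Preimages of $\{0\}$.} By \Cref{thm: ACD iff LCD I3-code} we need $|\mathcal{C}|=|\beta(\mathcal{C})|=1$. This forces $\mathcal{C}=\{0\}$, the trivial ACD code.

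\emph{Preimages of $\mathbb{F}_{3}$.} We need $|\mathcal{C}|=3$ with $\beta(\mathcal{C})=\mathbb{F}_{3}$. Such a $\mathcal{C}$ is a one-dimensional $\mathbb{F}_{3}$-subspace of $I_{3}$ not equal to $J_{3}=\{0,b,2b\}$. It is generated by an element $a+jb$ with $j\in\{0,1,2\}$; replacing the generator by its double $2a+2jb$ does not change the subspace. This gives exactly three codes:
\[
\{0,a,2a\},\qquad \{0,a+b,2a+2b\},\qquad \{0,a+2b,2a+b\}.
\]
By \Cref{thm: ACD iff LCD I3-code} each of them is ACD. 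As a sanity check, $(a+jb)^{2}=b\neq 0$, and $\mathcal{C}^{\perp}=J_{3}$ has size $3$, so $|\mathcal{C}||\mathcal{C}^{\perp}|=9$.

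Together with $\{0\}$ this gives four ACD codes. Any additive code omitted so far either has $|\beta(\mathcal{C})|<|\mathcal{C}|$ (namely $J_{3}$ and $I_{3}$ itself) or fails the size condition, so it is not ACD.

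The one delicate point is whether the three order-$3$ codes should be counted as distinct "up to equivalence." Under monomial equivalence, multiplying the coordinate by $2$ sends each of these subspaces to itself, so the three remain inequivalent. The statement counts them as distinct, which yields four.
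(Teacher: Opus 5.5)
Your proposal is correct and follows the paper's own route. Like the paper, you use \Cref{thm: ACD iff LCD I3-code} to reduce to the two ternary LCD codes of length $1$ and then lift them, which gives $\{0\}$ and the three subspaces $\langle a+jb\rangle$. Where you parametrize the order-$3$ lifts directly, exclude $J_3$ and $I_3$ through the cardinality condition, and check equivalence explicitly, you are giving a tidier version of the paper's step ``27 candidate lifts, keep the additive ones''.
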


\begin{proof}
We classify ACD codes of length 1 using \Cref{thm: ACD iff LCD I3-code} as follows. There are only two ternary LCD codes of length 1. One is $\left\{ 0\right\} $. The corresponding ACD code over $I_{3}$ is $\left\{ 0\right\} $. The other ternary LCD code is $\beta\left(\mathcal{C}\right)=\left\{ 0,1,2\right\} $. There are twenty seven codes of length 1 over $I_{3}$ that correspond to $\beta\left(\mathcal{C}\right)$ because each element of $\beta\left(\mathcal{C}\right)$ corresponds to 3 elements of $I_{3}$ and there are three elements in $\mathcal{C}$. Among these, $\left\{ 0,a,2a\right\} $, $\left\{ 0,a+b,2a+2b\right\} $  and $\left\{ 0,a+2b,2a+b\right\} $ are additive and ACD codes.
\end{proof}

In the following example, we show that a simple code of length 2 is ACD using \Cref{thm: ACD iff LCD I3-code}.
\begin{example}
For $\mathcal{C}=\left\{ 00, a0, 2a0\right\} $,
we note that
$\beta\left(\mathcal{C}\right)=\left\{ 00, 10, 20\right\}$ and
$\beta\left(\mathcal{C}\right)^{\perp}=\left\{ 00, 01,02\right\} $.
Then we have
$\left|\beta\left(\mathcal{C}\right)\right|\left|(\beta\left(\mathcal{C}\right))^{\perp}\right|=3^{2}$ and
$\beta\left(\mathcal{C}\right) \cap \beta\left(\mathcal{C}\right)^{\perp}=\left\{ \mathbf{0}\right\} $.
Hence, $\beta\left(\mathcal{C}\right)$ is a ternary LCD code. By \Cref{thm: ACD iff LCD I3-code}, $\mathcal{C}$ is an ACD code.
\end{example}

The next theorem classifies ACD codes for $n=2$.
\begin{theorem}
For $n=2$, there are exactly 40 ACD codes up to equivalence over $I_{3}$.
\end{theorem}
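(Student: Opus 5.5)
The plan is to use \Cref{thm: ACD iff LCD I3-code}. It reduces the problem to choosing a ternary LCD code $D=\beta(\mathcal{C})\subseteq\mathbb{F}_{3}^{2}$ together with a lift of it, then counting the lifts up to monomial equivalence. Since $|\beta(\mathcal{C})|=|\mathcal{C}|$, the map $\beta$ restricted to $\mathcal{C}$ is an $\mathbb{F}_{3}$-linear bijection onto $D$. Hence every ACD code has the form
\[
\mathcal{C}=\{\,a\mathbf{x}+b f(\mathbf{x}) : \mathbf{x}\in D\,\}
\]
for an $\mathbb{F}_{3}$-linear map $f\colon D\to\mathbb{F}_{3}^{2}$. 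Conversely, by the theorem every such graph is ACD. So for $\dim D=k$ there are exactly $3^{2k}$ ACD codes lying over $D$, each determined by the images under $f$ of a basis of $D$. As in the $I_{2}$ case, I would split the count by cardinality $1$, $3$ and $9$.

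\textbf{Cardinality one.} Only $\{00\}$ occurs.

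\textbf{Cardinality three.} I would first list the ternary $[2,1]$ LCD codes up to monomial equivalence, checking against \cite{Harada}. A code $\langle \mathbf{v}\rangle$ is LCD iff $\langle \mathbf{v},\mathbf{v}\rangle\neq 0$. This gives two classes, $\langle 10\rangle$ and $\langle 11\rangle$; note $\langle 12\rangle$ is monomially equivalent to $\langle 11\rangle$, and $\langle 11\rangle$ is LCD since $1+1=2\ne 0$.
\begin{itemize}
\item Over $\langle 10\rangle$ the lifts are generated by $(a+jb,\,kb)$.
\item Over $\langle 11\rangle$, and over its equivalent copy $\langle 12\rangle$, the lifts are generated by $(a+jb,\,\pm a+kb)$.
\end{itemize}
I would then quotient by the monomial group, namely coordinate permutations and scalings by $2$ on individual coordinates. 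This has to be done together with the fact that the generator and its double $2\mathbf{c}$ give the same code, which identifies $j\leftrightarrow 2j$ on the leading coordinate after rescaling. The resulting orbits are counted by a short hand computation, as in the proof for $n=3$, cardinality two, over $I_{2}$.

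\textbf{Cardinality nine.} Here $D=\mathbb{F}_{3}^{2}$, which is trivially LCD, and there are $3^{4}=81$ lifts, given by choosing $f(10)$ and $f(01)$ in $\mathbb{F}_{3}^{2}$. However, the equivalence can also mix in a change of basis of $D$ (any basis of the code can be used), so the orbit structure is less transparent. I would fix the form $\mathcal{C}=\langle (a+x_{1}b,\,y_{1}b),\,(x_{2}b,\,a+y_{2}b)\rangle$ and reduce $(x_{1},y_{1},x_{2},y_{2})$ under coordinate swap and sign changes. As for $n=3$ over $I_{2}$, I would confirm the orbit count with an exhaustive Magma check.

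Finally, I would sum the three counts and check that the total is $40$. The main obstacle is the equivalence bookkeeping in the cardinality-nine case, and to a lesser extent in the $\langle 11\rangle$ lifts. There, distinct parameter tuples can give the same code or equivalent codes in non-obvious ways. My first approach would be to compute orbit sizes via Burnside's lemma on the $8$-element monomial group acting on the $81$ lifts, then cross-check with computer enumeration.
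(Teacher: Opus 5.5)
Your outline follows the paper's route: reduce via the ACD/LCD correspondence to lifts of the ternary LCD codes of length $2$, split by cardinality $1,3,9$, count lifts up to monomial equivalence, and use computation for cardinality nine. But it never produces the counts $1$, $12$, $27$, and ``check that the total is $40$'' presupposes the result.

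More importantly, the one concrete identification you state for cardinality three is wrong. For $\mathbf{c}=(a+jb,\,kb)$ you have $2\mathbf{c}=(2a+2jb,\,2kb)$. Rescaling the first coordinate by $2$ gives $(4a+4jb,\,2kb)=(a+jb,\,2kb)$, because multiplication by $2$ acts on both the $a$- and the $b$-component. So passing to $2\mathbf{c}$ identifies $k\leftrightarrow 2k$, which is just the sign change on the second coordinate, not $j\leftrightarrow 2j$. Indeed the codes generated by $(a+b,0)$ and $(a+2b,0)$ are inequivalent: the orbit of $(a+b,0)$ under the $8$-element monomial group is $\{\pm(a+b,0),\pm(0,a+b)\}$. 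Following your rule you would find $4$ rather than $6$ classes over $\langle 10\rangle$, and the total would come out below $40$.

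The clean repair is Burnside's lemma in both nontrivial cases.

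\emph{Cardinality three.} Since $-I$ lies in the monomial group $G$, equivalence classes of cardinality-three ACD codes correspond to $G$-orbits on the $72$ vectors $\mathbf{c}\in I_3^2$ with $\beta(\mathbf{c})\neq\mathbf{0}$. All such $\beta(\mathbf{c})$ span LCD codes, because $u^2+v^2\neq 0$ for $(u,v)\neq(0,0)$ in $\mathbb{F}_3^2$. The fixed-point counts are:
\begin{itemize}
\item $72$ for $I$;
\item $6$ for each single sign change;
\item $6$ for each of $\pm\sigma$, where $\sigma$ is the coordinate swap;
\item $0$ for $-I$ and for the two elements $\pm J$ of order four.
\end{itemize}
This gives $96/8=12$, namely $6$ over $\langle 10\rangle$ and $6$ over $\langle 11\rangle$, matching the paper's explicit lists.

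\emph{Cardinality nine.} Write $\mathcal{C}_F=\{a\mathbf{x}+b\,\mathbf{x}F:\mathbf{x}\in\mathbb{F}_3^2\}$ with $F\in M_2(\mathbb{F}_3)$. A monomial matrix $M$ sends $\mathcal{C}_F$ to $\mathcal{C}_{M^{-1}FM}$. So your parameters $(x_1,y_1,x_2,y_2)$ transform by conjugation, not by a naive permutation or negation; for instance the swap acts as $(x_1,y_1,x_2,y_2)\mapsto(y_2,x_2,y_1,x_1)$. You are therefore counting conjugation orbits of $G$ on $M_2(\mathbb{F}_3)$. The centralizers have size $81$ for $\pm I$ and $9$ for each of the other six elements: diagonal matrices, $\mathbb{F}_3[\sigma]$, or $\mathbb{F}_3[J]\cong\mathbb{F}_9$ since $J^2=-I$. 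This gives $216/8=27$ orbits, which is the paper's Magma count.

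Then $1+12+27=40$.
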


\begin{proof}
We divide the proof into three parts depending on the cardinality of a code.
\begin{enumerate}

\item[{(i)}] cardinality one
\\
$\left\{ 00\right\} $ is a trivial ACD code.
\item[{(ii)}] cardinality three
\\
According to \cite{Harada}, there are two ternary $[2,1]$ LCD codes up to equivalence, one of which is $\beta\left(\mathcal{C}\right)=\left\{ 00, 10, 20\right\} $. By \Cref{thm: ACD iff LCD I3-code}, there are 6 ACD codes $\mathcal{C}$ that correspond to this code, which are exactly $\left\{ 00,a0,(2a)(0)\right\} $, $\left\{ 00,ab,(2a)(2b)\right\} $, $\left\{ 00,(a+b)(0),\right.\\ \left.(2a+2b)(0)\right\} $, $\left\{ 00,(a+b)(b),(2a+2b)(2b)\right\} $, $\left\{ 00,(a+2b)(0), (2a+b)(0)\right\}$, \\
$\left\{ 00,(a+2b)(b),(2a+b)(2b)\right\} $.
The other ternary $[2,1]$ LCD code is $\beta\left(\mathcal{C}\right)=\left\{ 00, 12, 21\right\} $. By \Cref{thm: ACD iff LCD I3-code}, there are 6 ACD codes $\mathcal{C}$ corresponding to this code, namely, $\left\{ 00,(a)\right. \\ \left.(2a),(2a)(a)\right\} $,$\left\{ 00,(a)(2a+b),(2a)(a+2b)\right\} $, $\left\{ 00,(a)(2a+2b), (2a)(a+b)\right\} $, \\
$\left\{ 00,(a+b)(2a+b),(2a+2b)(a+2b)\right\} $, $\left\{ 00,(a+b) (2a+2b),(2a+2b)(a+b)\right\} $, \\
$\left\{ 00,(a+2b)(2a+b),(2a+b)(a+2b)\right\} $.

\item[{(iii)}] cardinality nine
\\
The next possible cardinality of an additive code is nine because the additive code is a subgroup of $I_{3}^{2}$.
Since it is not straightforward to calculate the number of monomially inequivalent ACD codes by hand, we make use of the Magma for the calculation. According to the Magma computation, there are 27 ACD codes of this cardinality.
\end{enumerate}
This completes the proof.
\end{proof}

There is a trivial ACD code with cardinality one for $n=3$, namely $\left\{ 000\right\} $.
The next theorem counts ACD codes with cardinality three for $n=3$.
\begin{theorem}
For $n=3$, there are exactly 21 ACD codes with cardinality three up to equivalence.
\end{theorem}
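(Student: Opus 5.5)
The plan is to combine \Cref{thm: ACD iff LCD I3-code} with the classification of ternary $[3,1]$ LCD codes, and then count lifts up to monomial equivalence by hand.

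\textbf{Step 1: reduce to lifts of ternary $[3,1]$ LCD codes.} An additive code of cardinality three has the form $\mathcal{C}=\{\mathbf{0},\mathbf{c},2\mathbf{c}\}$. By \Cref{thm: ACD iff LCD I3-code}, $\mathcal{C}$ is ACD exactly when $\beta(\mathcal{C})$ is a ternary $[3,1]$ LCD code and $|\beta(\mathcal{C})|=3$. The second condition just says that $\beta(\mathbf{c})\neq\mathbf{0}$.

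\textbf{Step 2: identify the possible reductions.} The code $\langle\mathbf{v}\rangle$ with $\mathbf{v}\neq\mathbf{0}$ is LCD iff $\langle\mathbf{v},\mathbf{v}\rangle\neq0$. Since every nonzero square in $\mathbb{F}_3$ equals $1$, this means $\operatorname{wt}(\mathbf{v})\not\equiv0 \pmod 3$, so $\mathbf{v}$ has weight $1$ or $2$. Up to monomial equivalence this gives exactly two ternary $[3,1]$ LCD codes, generated by $100$ and $110$, in agreement with \cite{Harada}. I will choose the representative of $\beta(\mathbf{c})$ so that its nonzero entries are all $1$. This is possible by scaling coordinates by $2$ and, if needed, replacing $\mathbf{c}$ by $2\mathbf{c}$.

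\textbf{Step 3: count lifts of $100$.} Here $\mathbf{c}=(a+y_1b,\,y_2b,\,y_3b)$ with $y_i\in\mathbb{F}_3$.
\begin{itemize}
\item The value $y_1\in\{0,1,2\}$ is an invariant. If we scale the first coordinate by $2$ and then pass to $2\mathbf{c}$, we return to $a+y_1b$, so $y_1$ cannot be normalized away. This gives $3$ choices.
\item The last two coordinates lie in $J_3$. Scaling lets us assume each is $0$ or $b$, and permuting them leaves the three patterns $00$, $0b$, $bb$.
\end{itemize}
This yields $3\cdot3=9$ codes.

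\textbf{Step 4: count lifts of $110$.} Here $\mathbf{c}=(a+y_1b,\,a+y_2b,\,y_3b)$.
\begin{itemize}
\item The third coordinate can be normalized to $0$ or $b$, giving $2$ choices.
\item The pair $(y_1,y_2)$ is determined only up to swapping the first two coordinates, giving $6$ unordered pairs from $\mathbb{F}_3$.
\end{itemize}
This yields $6\cdot2=12$ codes.

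\textbf{Main obstacle.} The delicate point is making sure no further monomial equivalences identify these $9+12=21$ codes. The check is that any monomial map preserving a normalized $\beta$-pattern (all nonzero entries equal to $1$) must do one of two things. Either it acts trivially on the support of $\beta(\mathbf{c})$, or it scales the whole support by $2$ together with the swap $\mathbf{c}\mapsto 2\mathbf{c}$. Both of these preserve $y_1$, respectively the multiset $\{y_1,y_2\}$. Any other monomial map sends $110$ to a vector such as $120$, which is a non-normalized representative. That vector is already accounted for, because normalizing it again lands in the same class.

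Adding the two cases gives $9+12=21$ inequivalent ACD codes. If needed, a Magma computation can confirm the count, as was done for the other cases.
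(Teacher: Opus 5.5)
Your proposal is correct and follows essentially the same route as the paper. Both reduce via \Cref{thm: ACD iff LCD I3-code} to lifts of the two ternary $[3,1]$ LCD codes generated by $100$ and $110$, then count those lifts up to monomial equivalence to get $9+12=21$. The difference is bookkeeping: you count normal forms directly ($3\cdot 3$ and $6\cdot 2$, with the $b$-components on the normalized support and the number of nonzero $J_3$-entries off the support as invariants), while the paper starts from all $27$ lifts and subtracts identifications ($27-9-9$ and $27-9-6$), relying on Magma for confirmation.
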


\begin{proof}
According to \cite{Harada}, there are only two ternary $[3,1]$ LCD codes up to equivalence, one of which is $\beta\left(\mathcal{C}\right)=\left\{ 000, 110, 220\right\} $. By \Cref{thm: ACD iff LCD I3-code}, there are $3^{3}-9-6=12$ ACD codes over $I_{3}$ corresponding to this ternary LCD code. From the twenty seven we take away nine because $(a+b)(a)(0)$ is the same as $(a)(a+b)(0)$ when permuted together with the cases where $0$ is replaced by $b$ or $2b$. The same reasoning applies to $(a+2b)(a)(0)$ and $(a)(a+2b)(0)$, and $(a+2b)(a)(0)$ and $(a)(a+2b)(0)$. We also take away six for the following reason. Since 110 is the generator of the ternary LCD code, without loss of generality, let us consider the elements of $I_{3}^{3}$ that correspond to 110 only. We notice that $aab$ is monomially equivalent to $(a)(a)(2b)$. Similary for the cases $(a)(a+b)(b)$, $(a)(a+2b)(b)$, $(a+b)(a+b)(b)$, $(a+b)(a+2b)(b)$, $(a+2b)(a+2b)(b)$.
According to \cite{Harada}, there is another inequivalent ternary $[3,1]$ LCD code $\beta\left(\mathcal{C}\right)=\left\{ 000, 100, 200\right\} $. By \Cref{thm: ACD iff LCD I3-code}, there are $3^{3}-9-9=9$ ACD codes over $I_{3}$ corresponding to this ternary LCD code. From the twenty seven we take away nine because $a0b$ is the same as $ab0$ when permuted, $(a)(0)(2b)$ the same as $(a)(2b)(0)$, $(a)(b)(2b)$ the same as $(a)(2b)(b)$. Similarly for the cases where $a$ is replaced by $a+b$ or $a+2b$. We also take away three because $a0b$ is monomially equivalent to $(a)(0)(2b)$, and similary for the cases where $a$ is replaced by $a+b$ or $a+2b$. We additionally take away six because $abb$ is monomially equivalent to $(a)(b)(2b)$ and $(a)(2b)(2b)$, and similarly for the cases where $a$ is replaced by $a+b$ or $a+2b$. Thus, in total we have 21 inequivalent ACD codes. A Magma computation confirms that this result is correct.
\end{proof}

The next theorem counts ACD codes with cardinality nine for $n=3$.
\begin{theorem}
For $n=3$, there are 333 ACD codes with cardinality nine up to equivalence.
\end{theorem}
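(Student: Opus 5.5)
We would prove the count by reducing it, via \Cref{thm: ACD iff LCD I3-code}, to a finite enumeration over the ternary $[3,2]$ LCD codes. By that theorem, an additive code $\mathcal{C}\subseteq I_{3}^{3}$ with $|\mathcal{C}|=9$ is ACD exactly when $\beta(\mathcal{C})$ is a ternary $[3,2]$ LCD code and $\beta$ restricted to $\mathcal{C}$ is injective. Such a $\mathcal{C}$ is therefore an $\mathbb{F}_{3}$-subspace of $I_{3}^{3}$ whose projection to the $a$-coordinates is a two-dimensional LCD subspace $D$. Concretely, $\mathcal{C}$ is the graph of an $\mathbb{F}_{3}$-linear map $f\colon D\to\mathbb{F}_{3}^{3}$ that records the $b$-coordinates: each $\mathbf{x}\in D$ lifts to $\mathbf{x}a+f(\mathbf{x})b$. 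For a fixed $D$ with basis $\mathbf{g}_1,\mathbf{g}_2$, the lifts are determined by choosing $f(\mathbf{g}_1),f(\mathbf{g}_2)\in\mathbb{F}_3^3$. This gives exactly $3^{6}=729$ ACD codes lying over each $D$, and distinct choices give distinct codes.

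The first step is to list the ternary $[3,2]$ LCD codes up to monomial equivalence using the classification in \cite{Harada}. A $[3,2]$ code is LCD iff its one-dimensional dual $\langle \mathbf{h}\rangle$ is LCD, i.e.\ $\langle\mathbf{h},\mathbf{h}\rangle\neq 0$, i.e.\ $\operatorname{wt}(\mathbf{h})\not\equiv 0 \pmod 3$. Up to monomial equivalence this leaves $\mathbf{h}=100$ and $\mathbf{h}=110$, giving two representatives, $D_1=\langle 010,001\rangle$ and $D_2=\langle 1\,2\,0, 001\rangle$ (or an equivalent basis). This matches the two $[3,1]$ LCD codes used in the previous theorem.

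Next we would count orbits. Monomial equivalence on $I_3^3$ acts by permuting coordinates and multiplying a coordinate by $2$, which sends $ia+jb\mapsto 2ia+2jb$ and so commutes with $\beta$. Hence two ACD codes are equivalent only if their reductions are, and the ACD codes over $D_i$ split into orbits under the stabilizer $\operatorname{Mon}(D_i)$ of $D_i$ inside the monomial group of order $48$. We then apply Burnside's lemma to the action of $\operatorname{Mon}(D_i)$ on the $729$ graphs, i.e.\ on $\operatorname{Hom}(D_i,\mathbb{F}_3^3)$ with the induced action $M\cdot f = M\circ f\circ M^{-1}|_{D_i}$. For each group element $M$, the fixed points form the solution space of a linear system, so their number is $3^{\dim}$. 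Summing over the stabilizer for $D_1$ and for $D_2$ and adding the two orbit counts should give $333$.

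The main obstacle is getting the two stabilizers and their fixed-space dimensions exactly right. For $D_1$, the stabilizer of $\langle100\rangle^{\perp}$ has order $2\cdot 8=16$. For $D_2$, it is generated by the swap of the first two coordinates, simultaneous negation of those coordinates, and the scaling of the third coordinate, with some care about sign interactions. Errors here change the total, so as a safeguard, and in line with how the paper handles the cardinality-nine case for $n=2$, we would confirm the final count with an exhaustive Magma computation of monomial equivalence classes among the $2\cdot 729$ candidate codes.
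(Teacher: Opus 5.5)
Your proposal is correct, and it takes a genuinely different route from the paper, whose proof is only an exhaustive Magma search.

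You use \Cref{thm: ACD iff LCD I3-code} structurally. The ACD codes of size $9$ are exactly the graphs $\{\mathbf{x}a+f(\mathbf{x})b : \mathbf{x}\in D\}$, where $D$ is one of the nine $2$-dimensional ternary LCD codes and $f\in\operatorname{Hom}(D,\mathbb{F}_3^3)$. Equivalence then becomes the action of the signed permutation group of order $48$, so the count reduces to Burnside over the stabilizers of $D_1$ and $D_2$.

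The sums you left open do give $333$. Every element $M$ of either stabilizer satisfies $M^4=I$, so $M$ is diagonalizable over $\mathbb{F}_9$. The $f$ fixed by $M$ are the $\langle M\rangle$-equivariant maps $D\to\mathbb{F}_3^3$, so their number is $3^{e(M)}$ with $e(M)=\sum_{\lambda}m_D(\lambda)\,m_{\mathbb{F}_3^3}(\lambda)$, where $m$ denotes eigenvalue multiplicity.

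For $D_2$, every stabilizer element acts diagonally on the basis $120,001,110$, and each of the eight sign patterns occurs exactly once. The Burnside sum is therefore $2\cdot 3^6+2\cdot 3^4+4\cdot 3^3=1728$, giving $1728/8=216$ orbits.

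For $D_1$, the eight diagonal sign matrices contribute the same $1728$. The eight elements involving the swap $(2\,3)$ induce an antidiagonal map $N$ on $D_1$. When $N^2=I$ (four elements) each fixes $3^3$ maps. When $N^2=-I$ (four elements, characteristic polynomial $x^2+1$, irreducible over $\mathbb{F}_3$) each fixes $3^2$ maps. The total is $1728+108+36=1872$, giving $1872/16=117$ orbits.

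Hence $117+216=333$, and no computer check is needed. Your route gives a hand-verifiable derivation that explains where the number comes from. It also confirms that the paper's ``equivalence'' must be signed-permutation equivalence: the same bookkeeping reproduces its counts $27$ for $n=2$ and $9+12=21$ for $n=3$, cardinality $3$. The paper's brute-force search needs no structural analysis, but it is a black box.
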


\begin{proof}
Since calculating by hand is not easy, we have used Magma for the computation.
Using the exhaustive search, we have checked that there are exactly 333 ACD codes with cardinality nine up to equivalence.
\end{proof}

The classification of ACD codes over $I_{3}$ is summarized in Table 6. For each length and cardinality, the number of ACD codes is given in the third column. The highest  minimum Hamming and Lee distances are given in the next two columns. The last column provides examples of ACD codes with the highest Lee and hence Hamming minimum distances. MDS over $I_{3}$ is marked with $\dagger$ and ternary optimal LCD code is marked with $^*$. We note that in some cases, we have MDS over $I_{3}$, and in all cases we have ternary optimal LCD code.

\begin{table}[H]
\begin{center}

\caption{Summary of classifications over $I_{3}$. Here $\dagger$ denotes MDS over $I_3$ and * denotes a ternary optimal LCD code}\label{tab1}%

\resizebox{\textwidth}{!}{%
\begin{tabular}{p{0.3cm}>{\centering}p{2.1cm}>{\centering}p{2.8cm}>{\centering}p{2.3cm}>{\centering}p{2.3cm}>{\centering}p{4.3cm}}
\hline
$n$ & Cardinality  &  \#ACD codes &  Highest $d_H$($d_L$) & Ternary LCD & Example \tabularnewline
\hline
1 & 3 & 3 & 1($2^*$) & $\left[1,1,1\right]^{*}$ & $\left\{ 0,a+2b,2a+b\right\} $\tabularnewline

2 & 3 & 12 & 2($4^*$) & $\left[2,1,2\right]^{*}$ & $\{ 00,(a+2b)(2a+b),$\\
                                                 $(2a+b)(a+2b)\} $  \tabularnewline

2 & 9 & 27 & $2^{\dagger}$($3^*$) & $\left[2,2,1\right]^{*}$ &  $\{ 00,(2a+2b)(a),$\\
                                                   $(a+b)(2a),$\\
                                                  $(2b)(2a+2b),(b)(a+b),$\\
                                                  $(2a)(2a+b),(a)(a+2b),$\\
                                                  $(a+2b)(b),(2a+b)(2b)\}$     \tabularnewline

3 & 3 & 21 & 3($5^*$) & $\left[3,1,2\right]^{*}$ & $\{ 000,(a+2b)(a+2b)(2b),$\\
                                                 $(2a+b)(2a+b)(b)\} $ \tabularnewline

3 & 9 & 333 & $3^{\dagger}$($4^*$) & $\left[3,2,1\right]^{*}$ & $\{ 000,(2a)(b)(a+2b),$\\
                                                  $(a+2b)(2a)(2a+2b),$\\
                                                  $(a)(2b)(2a+b),$\\
                                                  $(2a+2b)(2a+2b)(a),$\\
                                                  $(a+b)(a+b)(2a),$\\
                                                  $(b)(a+2b)(2b),$\\
                                                  $(2b)(2a+b)(b),$\\
                                                  $(2a+b)(a)(a+b)\}$ \tabularnewline
\hline
\end{tabular}}

\end{center}
\end{table}

\section{Conclusion}
\label{sec: conclusion}

We have introduced additive complementary dual (ACD) codes over the rings $I_{2}$ and $I_{3}$ since we have observed that there are no nontrivial LCD codes over $I_{2}$  and $I_{3}$. We have shown relations between ACD codes over $I_{2}$ and binary LCD codes and between ACD codes over $I_{3}$ and ternary LCD codes. Using the first relation and Magma computation, we have classified ACD codes over $I_{2}$ with the highest minimum distances for $n=1, 2, 3$ and partially for $n=4, 5$. Using the second relation and Magma computation, we have classified ACD codes over $I_{3}$ with the highest minimum distances for $n=1, 2$ and partially for $n=3$. We have generalized the two relations into a relation between ACD codes over $I_{p}$ and $p$-ary LCD codes. Using the relations simplifies the classifications for small lengths and cardinalities over $I_{2}$ and $I_{3}$, but still exhaustive search and Magma computations are needed as length and cardinality increase because there are no apparent patterns among ACD codes. Hence, as future work one can construct or classify ACD codes over $I_{2}$ or $I_{3}$ of length $n \ge 4$.

\noindent
As an example of future work, one may modify the construction methods for self-orthogonal codes introduced in \cite{Kim5}. Specifically, if the generator matrix of $\alpha\left(\mathcal{C}\right)\subseteq\mathbb{F}_{2}^{n}$ is given by $G=\left(xI,yM\right)$ where $x,y\in\mathbb{F}_{2}$, $I_{2}$ the identity matrix, $M$ a binary matrix, in analogy with a construction method in \cite{Kim5}, we can determine whether $\alpha\left(\mathcal{C}\right)$ is LCD or not as follows. For $x=y=0$, $\alpha\left(\mathcal{C}\right)$ cannot be LCD. If $x=0$ and $y\neq0$, $\alpha\left(\mathcal{C}\right)$ is LCD if det$M=1$. When $x\neq0$ and $y=0$, $\alpha\left(\mathcal{C}\right)$ is always LCD. For $x\neq0$ and $y\neq0$, $\alpha\left(\mathcal{C}\right)$ is LCD if det$(I+MM^{T})=1$. For those cases where $\alpha\left(\mathcal{C}\right)$ is LCD, $\mathcal{C}\subseteq I_{2}^{n}$ is ACD using \Cref{thm: ACD iff LCD I2-code}.

\section*{Acknowledgments}
J.-L. Kim was supported in part by the BK21 FOUR (Fostering Outstanding Universities for Research) funded by the Ministry of Education (MOE, Korea) and National Research Foundation of Korea (NRF) under Grant No. 4120240415042 and by Basic Science Research Program through the National Research Foundation of Korea (NRF) funded by the Ministry of Science and ICT under Grant No. RS-2025-24534992.
Y.G. Roe was supported by Basic Science Research Program through the National Research Foundation of Korea (NRF) funded by the Ministry of Education (RS-2025-25415913).









\end{document}